%% file: main.tex
\pdfoutput=1
\RequirePackage[svgnames,dvipsnames]{xcolor}
\documentclass[journal,onecolumn,12pt,draftclsnofoot]{ieeecolor}
\usepackage{lcsys}
\usepackage{textcomp}
\usepackage{amssymb, amsfonts, mathtools, xargs, tensor, cite, stmaryrd, mathrsfs, algpseudocode, algorithm, graphicx}
\usepackage[unicode=true, colorlinks=false, hidelinks]{hyperref}

\newcommand{\R}{\mathbb{R}}

\newcommand{\tr}{\operatorname{tr}}

\newcommand{\bmax}{\beta_{\max}}

\newtheorem{assumption}{Assumption}
\newtheorem{definition}{Definition}
\newtheorem{remark}{Remark}

\newtheorem{lemma}{Lemma}

\newtheorem{theorem}{Theorem}

\newtheorem{example}[theorem]{Example}

\usepackage{tikz}
\usepackage{pgfplots}
\usepgfplotslibrary{groupplots}
\usepackage{pgfplotstable}
\pgfplotsset{compat=1.18}
\usetikzlibrary{spy}
 \usepackage{booktabs}
 \usepackage{multirow}
 \usepackage{siunitx}
 
\begin{document}

\title{On sparsity and directional forgetting in adaptive control}
\author{Tochukwu E. Ogri$^{1}$, Trivikram Satharasi$^{1}$, Muzaffar Qureshi$^{1}$, Kyle Volle$^{2}$, Rushikesh Kamalapurkar$^{1}$
\thanks{This research was supported in part by the Air Force Research Laboratory under contract number FA8651-24-1-0019. Any opinions, findings, or recommendations in this article are those of the author(s), and do not necessarily reflect the views of the sponsoring agency.}% 
\thanks{$^{1}$ Department of Mechanical and Aerospace Engineering, University of Florida, email: {\tt\footnotesize \{tochukwu.ogri, t.satharasi, muzaffar.qureshi, rkamalapurkar\} @ufl.edu}.}%
\thanks{$^{2}$ Torch Technologies, Shalimar, FL, USA, email: {\tt \footnotesize Kyle.Volle@torchtechnologies.com.}}}
\maketitle
\thispagestyle{empty}
\pagestyle{empty}

\begin{abstract}
This paper develops a sparsity-promoting memory regressor extension (MRE) adaptation law with directional forgetting for nonlinear control-affine systems with linearly parameterized uncertainty. The objective is to use directional forgetting to selectively discount obsolete information and leverage $\ell_1$ regularization to promote sparsity of the parameter estimates. While  $\ell_1$ regularization has been applied to the system identification problem in an offline setting, a contribution of this paper is to develop a recursive least squares update law to implement  $\ell_1$ regularization in online adaptive control. In particular, we show that $\ell_1$-regularized recursive least squares is realized via a sliding mode update law. A nonsmooth Lyapunov-based stability analysis is then used to show that the tracking and parameter estimation errors are ultimately bounded under a subspace excitation condition. Simulation results on a Van der Pol oscillator demonstrate the ability of the developed sparsity-promoting MRE controller to recover sparse dynamics while maintaining stable tracking.
\end{abstract}

\section{Introduction}

%In many dynamical systems, accurate mathematical models are unavailable due to unknown, noisy, or unmeasurable parameters. Controllers designed using such imperfect models may therefore fail to achieve the desired performance objectives. Adaptive control addresses this limitation by estimating unknown system parameters online within a Lyapunov stability framework, enabling regulation and trajectory tracking despite modeling uncertainty \cite{SCC.Anderson1977, SCC.Green.Moore1986, SCC.Astrom.Wittenmark2008, SCC.Krstic.Kanellakopoulos.ea1995, SCC.Ioannou.Sun1996}. 
Classical adaptive control methods, including gradient adaptation, least-squares estimation, model reference adaptive control (MRAC), and self-tuning regulators (STR), provide rigorous stability guarantees and have been successfully applied to a wide range of uncertain systems. However, a fundamental limitation of these approaches is that convergence of the parameter estimates generally requires the regressor to satisfy the persistent excitation (PE) condition \cite{SCC.Astrom.Wittenmark2008, SCC.Krstic.Kanellakopoulos.ea1995, SCC.Ioannou.Sun1996, SCC.Kamalapurkar2017, SCC.Pan.Shi.ea2024}, which is difficult to verify in practice, particularly for nonlinear systems.

To reduce reliance on PE, considerable effort has been devoted to improving the identifiability properties of adaptive learning schemes. Dynamic regressor extension and mixing (DREM), for example, transforms the original vector regression problem into a set of decoupled scalar regression problems, enabling parameter convergence under weaker excitation conditions \cite{SCC.Pyrkin.Bobtsov.ea2019}. Memory regressor extension (MRE) supplements current measurements with historical behavior of the system, enabling parameter convergence under relaxed excitation conditions such as interval excitation \cite{SCC.Katiyar.BasuRoy.ea2022, SCC.Pan.Shi.ea2024} (also known as finite excitation (FE) \cite{SCC.Parikh.Kamalapurkar.ea2019,SCC.Ogri.Bell.ea2023,SCC.Qureshi.Ogri.ea2025,SCC.Xing.Na.ea2025}). 
%Another commonly used MRE approach is concurrent learning (CL), which incorporates recorded trajectory data directly into the adaptive update law. By storing informative data over a finite time interval and constructing an auxiliary regression condition, CL enables parameter convergence under a finite excitation condition \cite{SCC.Parikh.Kamalapurkar.ea2019,SCC.Ogri.Bell.ea2023,SCC.Qureshi.Ogri.ea2025}. 

Although MRE approaches enable parameter convergence under relaxed excitation conditions, they can be unresponsive to changes in system parameters since stored data becomes inconsistent with the current system dynamics when the parameters change. This phenomenon is often referred to as loss of alertness \cite{SCC.Lee.Shin.ea2019}. To solve this problem, forgetting mechanisms have been proposed, including uniform forgetting strategies with variable forgetting factors \cite{SCC.Pan.Aranovskiy.ea2019} and directional forgetting methods \cite{SCC.Cao.Schwartz.ea2000, SCC.Bittanti.Bolzern.ea1990, SCC.Lee.Shin.ea2019, SCC.Zhu.Yu.ea2022, SCC.Wen.Xing.ea2025} that selectively attenuate stored information along directions associated with newly arriving data while preserving informative components in orthogonal directions. Recent work has incorporated directional forgetting through oblique projection decompositions to avoid estimator windup in both discrete-time linear systems \cite{SCC.Zhu.Yu.ea2022} and continuous-time nonlinear parameter estimation \cite{SCC.Wen.Xing.ea2025}.

Existing alert MRE approaches in \cite{SCC.Lee.Shin.ea2019} and \cite{SCC.Pan.Aranovskiy.ea2019}, however, assume exact knowledge of a linear parameterization of the system. When such a linear parameterization is not available, a large nonlinear dictionary of candidate functions is used for identification, and the resulting MRE includes a residual error term that also depends on the historical behavior of the system. Due to the trajectory-dependent residual error, uniform bounds and Lyapunov-based convergence guarantees are more difficult to establish. Moreover, the resulting overparameterized representation contains many candidate functions that do not contribute to the true dynamics.

The true dynamics typically depend on only a small subset of the selected basis functions, leading to a sparse parameter vector. Sparse identification methods such as Sparse Identification of Nonlinear Dynamics (SINDy) recover governing equations from large candidate libraries using sparsity-promoting regression techniques \cite{SCC.Brunton.Proctor.ea2016}. However, these approaches are typically formulated as offline batch optimization problems and are therefore not well-suited for real-time applications or integration within adaptive control loops.

Recent works \cite{SCC.Istiqphara.Wahyunggoro.ea2024, SCC.MedaCampana.Sanchez.ea2025} have explored sparse identification using recursive or streaming estimation schemes, where model parameters are updated sequentially from incoming data using techniques such as recursive least squares combined with sparsity-promoting regularization, as well as streaming weak-SINDy formulations that construct regression features incrementally from data streams \cite{SCC.Russo.Laiu.ea2025}. While these methods enable real-time model discovery and maintain interpretable state-space representations, they are generally developed for system identification in open-loop or offline settings. As a result, they do not explicitly account for closed-loop stability constraints that arise when the learned models or parameter updates are embedded within adaptive control laws, where the estimation dynamics and control dynamics are coupled.

Motivated by these observations, this paper develops an online sparse identification framework for nonlinear control-affine systems that integrates a novel MRE approach with novel sparsity-promoting adaptive update laws. Incorporating $\ell_1$ regularization within a recursive least-squares adaptation setting is challenging because sparsity-promoting loss functions are often nonsmooth, requiring tools from nonsmooth analysis in the design and the stability analysis. In this paper, we show that $\ell_1$-regularized recursive least squares is realized in an online, continuous-time setting via a sliding mode-like update law.  The combination of $\ell_1$ regularization and directional forgetting yields a discontinuous update law, and the closed-loop system is formulated as a differential inclusion and analyzed using Clarke generalized gradients.

A preliminary version of this work~\cite{SCC.Satharasi.Ogri.ea2026a} developed a sparsity-promoting integral concurrent learning (SP-ICL) update law using recorded integral data. In the SP-ICL framework, the aggregated finite-impulse-response (FIR) integral filters provide a PE surrogate through stored data, ensuring the memory matrix remains full rank and avoids the singularity issues associated with continuous regressor-based variable forgetting. However, the SP-ICL approach in \cite{SCC.Satharasi.Ogri.ea2026a} was restricted to exact linear parameterizations and implementation of the FIR filters in the control loop can become computationally prohibitive as the size of the history stack increases.

% This paper replaces SP-ICL with a sparsity-promoting variable-rate directional forgetting (SP-VDF) framework, derived from a generalized gradient descent applied to an $\ell_1$-regularized cost functional. While grounded in the broader MRE architecture, the SP-VDF framework replaces the FIR filters in ICL with low-pass filters that are implemented efficiently in their state-space form. The transition to MRE necessitates a shift from scalar discounting, such as \cite{SCC.Satharasi.Ogri.ea2026a, SCC.Parikh.Kamalapurkar.ea2019} to a matrix-valued directional forgetting mechanism. 
This paper replaces the history-stack-based SP-ICL approach \cite{SCC.Satharasi.Ogri.ea2026a} with a sparsity-promoting variable-rate directional forgetting (SP-VDF) framework based on generalized gradient descent of an $\ell_1$-regularized cost functional. Both methods use the MRE architecture to evaluate filtered signals, but SP-VDF replaces the discrete data storage of ICL with a continuous-time update of the regressor matrix. Consequently, scalar discounting and data purging used with history stacks, as in \cite{SCC.Satharasi.Ogri.ea2026a, SCC.Parikh.Kamalapurkar.ea2019}, are replaced by a continuous matrix-valued directional forgetting mechanism that preserves parameter alertness while keeping the parameter estimates bounded.

This paper further extends prior work by incorporating bounded function approximation errors, whereas \cite{SCC.Cao.Schwartz.ea2000, SCC.Lee.Shin.ea2019, SCC.Satharasi.Ogri.ea2026a} assume exact linear parameterizations. In the linearly parameterizable (LP) setting, the scalar forgetting mechanisms impose stronger excitation requirements such as the subspace PE condition (see Section~\ref{sec:MRE_alert}) to maintain boundedness of the memory regressor (MR). In contrast, the proposed SP-VDF update dynamically adapts the forgetting rate based on the conditioning of the MR and requires only FE to guarantee MR boundedness for LP systems. For non-LP systems, bounded approximation errors introduce an additional residual term, and the subspace PE condition is required to establish stability of the resulting residual dynamics.

% \textit{Section Outline:} Section~\ref{sec:problemformulation} formally defines the problem, outlines assumptions on the system and controller, Section~\ref{sec}

\section{Problem Formulation}\label{sec:problemformulation}
Consider the nonlinear control-affine system given by
\begin{equation}\label{eq:dynamics}
    \dot{x} = f(x) + g(x)u, \quad x(t_{0}) = x_{0},
\end{equation}
where $x \in \mathbb{R}^n$ is the state vector and $u \in \mathbb{R}^m$ is the control input. The drift dynamics $f: \mathbb{R}^{n} \to\mathbb{R}^n$ are unknown, while the control effectiveness matrix $g:\mathbb{R}^{n}\to\mathbb{R}^{n\times m}$ is known.

\begin{assumption}\label{ass:bounds}
 The functions $f$ and $g$ are locally Lipschitz continuous.
\end{assumption}

Let $\mathcal{D} \coloneqq \{Y_j\}_{j=1}^p$ denote a dictionary of locally Lipschitz continuous basis functions. Let the regressor $Y: \mathbb{R}^{n} \to \mathbb{R}^{n \times p}$ be defined as $Y \coloneqq \begin{bmatrix}Y_1 & \cdots & Y_p \end{bmatrix}$, where the $j$-th column corresponds to the candidate basis function $Y_j: \mathbb{R}^{n} \to \mathbb{R}^n$. 

\begin{assumption}\label{ass:approx}
The unknown drift dynamics can be approximated on a prescribed compact set $\mathbb{X} \subset \mathbb{R}^{n}$ containing the origin in its interior using the dictionary $\mathcal{D}$ as
\begin{equation}\label{eq:f_approx}
    f(x) = Y(x)\theta + \epsilon(x),
\end{equation}
where $\theta \in \mathbb{R}^p$ is an unknown constant parameter vector and the function approximation error $\epsilon:\mathbb{R}^{n}\to \mathbb{R}^n$ satisfies $\sup_{x\in\mathbb{X}}\|\epsilon(x)\|\le \bar\epsilon$ for a sufficiently small constant (in the sense made precise in Theorem~\ref{thm:main}) $\bar\epsilon \ge 0$. Furthermore, the parameter vector $\theta$ is assumed to be sparse in the sense that it contains at most $s$ nonzero entries, where $s \ll p$. 
\end{assumption}
\begin{assumption}
\label{ass:theta_bounded}
The unknown parameter vector $\theta \in \mathbb{R}^p$ is contained in a known compact set 
\begin{equation}
    \Theta \coloneqq \{\theta \in \mathbb{R}^p \mid \|\theta\| \le r_{\theta}\},
\end{equation}
where $r_{\theta} > 0$ is a known constant.
\end{assumption}
Under Assumption~\ref{ass:approx}, the system dynamics \eqref{eq:dynamics} can be written as
\begin{equation}\label{eq:sparseDynamics}
    \dot{x} = Y(x)\theta + g(x)u + \epsilon(x).
\end{equation}

The control objective is to track a desired trajectory $x_d: \mathbb{R}_{\geq 0} \to \mathbb{R}^n$ that satisfies the following assumption.
\begin{assumption}\label{ass:trajectory}
The reference trajectory $x_d$ is continuously differentiable, and both $x_d$ and $\dot{x}_d$ are bounded.
\end{assumption}

\begin{assumption}\label{ass:pseudo}
The matrix $g(x)$ has full row rank for all $x\in\mathbb{X}$. Consequently, there exists a locally Lipschitz right pseudoinverse $g^+(x)$ satisfying $g(x)g^+(x)=I_n$ for all $x \in \mathbb{X}$.
\end{assumption}
\begin{remark}
Assumption~\ref{ass:pseudo}, while restrictive, simplifies the control design and focuses the discussion on the design of the sparsity-promoting update law by allowing algebraic cancellation of the known control-effectiveness matrix through a right pseudoinverse. The design of tracking controllers for systems that do not satisfy this condition is outside the scope of this paper.
\end{remark}

Let $\hat{\theta}\in\mathbb{R}^p$ denote the online estimate of $\theta$ and let $K = K^{\top} \in \mathbb{R}^{n \times n}, K \succ 0 $ be a constant feedback gain matrix. Under Assumption~\ref{ass:pseudo}, the input is designed using certainty-equivalence control as
\begin{equation}\label{eq:control}
u = g^+(x)\big(\dot{x}_d - Y(x)\hat{\theta} - Ke\big),
\end{equation}
where $e \coloneqq x - x_d$ is the tracking error.
Substituting \eqref{eq:control} into \eqref{eq:sparseDynamics} yields the tracking error dynamics
\begin{equation}\label{eq:closed_error}
\dot e = -Ke + Y(x)\tilde{\theta} + \epsilon(x),
\end{equation}
where $\tilde{\theta} \coloneqq \theta - \hat{\theta}$ denotes the parameter estimation error.

The regressor matrix $Y$ is constructed using a large dictionary of candidate basis functions. Consequently, many dictionary elements may be redundant or weakly informative. For such regressors, gradient and least-squares-based update laws tend to distribute parameter energy across correlated basis functions, which can lead to large parameter norms and reduced interpretability of the identified model \cite{SCC.Brunton.Proctor.ea2016}. The objective of this paper is to design an online parameter estimator for the system \eqref{eq:sparseDynamics} that identifies a sparse parameter vector $\theta$ using state and input memory generated during system operation. To that end, the following section presents a brief overview of a memory-based adaptive controller.

\section{Memory-Based Regression: A Brief Review}\label{subsec:df_icl}
Memory-based learning in the presence of disturbances and/or function approximation errors requires an affine regression equation (AfRE) relating measurable signals to the unknown parameters. 
\subsection{Affine Regression Equation}
For the nonlinear system \eqref{eq:sparseDynamics}, an AfRE of the form
\begin{equation}\label{eq:filtered_regression}
u_f = Y_f\theta + d_f,
\end{equation}
is developed, where $u_f \in \mathbb{R}^n$ and $Y_f \in \mathbb{R}^{n\times p}$ are computable from measured data, and $d_f \in \mathbb{R}^n$ collects the filtered approximation error. The AfRE can be constructed using FIR or infinite-impulse-response (IIR) filters.

\subsubsection{Infinite-Impulse-Response Filtering}
Let $h$ denote a strictly proper exponentially stable linear time-invariant (LTI) system. Applying $h$ to both sides of \eqref{eq:sparseDynamics} yields
\begin{equation}
h[\dot{x}] = h[Y(x)]\,\theta + h[g(x)u] + h[\epsilon(x)],
\end{equation}
where $h[y](t)$ denotes the output of the LTI system, with input $y$, starting from zero initial state, and evaluated at time $t$. The filtered signals in \eqref{eq:filtered_regression} can then be defined as $u_f \coloneqq h[\dot{x}-g(x)u]$, $Y_f \coloneqq h[Y(x)]$, and $d_f \coloneqq h[\epsilon(x)]$.

A common choice is the first-order stable filter $h(s)=\frac{\varrho}{s+\varrho}$, where $\varrho>0$ and $s$ is the Laplace variable \cite{SCC.Ioannou.Sun1996}. This choice yields the filter commonly used in composite adaptive control \cite{SCC.Pan.Shi.ea2024} as
\begin{align}
    \dot{u}_{f} &= -\varrho\,u_{f} + \varrho\big(\dot{x}-g(x)u\big),
    & u_{f}(t_{0}) &= 0, \label{eq:uf_filter}\\
    \dot{Y}_{f} &= -\varrho\,Y_{f} + \varrho\,Y(x),
    & Y_{f}(t_{0}) &= 0, \label{eq:Yf_filter}\\
    \dot{d}_{f} &= -\varrho\,d_{f} + \varrho\,\epsilon(x),
    & d_{f}(t_{0}) &= 0. \label{eq:df_filter}
\end{align}
Note that under Assumption~\ref{ass:approx}, and provided that $x(t)\in\mathbb{X}$ for all $t\ge t_0$, there exists a constant $\bar{d} \ge 0$ such that $\|d_f(t)\|\le \bar{d}$ for all $t\ge t_{0}$.

\subsubsection{Finite-Impulse-Response Filtering}
Let $\Delta_t>0$ denote a window length. Integrating \eqref{eq:sparseDynamics} over $[t-\Delta_t,t]$ and applying the Fundamental Theorem of Calculus yields, for all $t \ge t_0 + \Delta_t$,
\begin{equation}\label{eq:int_reg}
x(t)-x(t-\Delta_t) = \int_{t-\Delta_t}^{t}Y(x(\tau))\,\mathrm{d}\tau\,\theta + \int_{t-\Delta_t}^{t}g(x(\tau))u(\tau)\,\mathrm{d}\tau + \int_{t-\Delta_t}^{t}\epsilon(x(\tau))\,\mathrm{d}\tau .
\end{equation}
For $t \geq t_{0} + \Delta_t$, the signals in \eqref{eq:filtered_regression} can thus be defined as $u_f(t) \coloneqq x(t)-x(t-\Delta_t) - \int_{t-\Delta_t}^{t}g(x(\tau))u(\tau)\,\mathrm{d}\tau$, $Y_f(t) \coloneqq \int_{t-\Delta_t}^{t}Y(x(\tau))\,\mathrm{d}\tau$, and $d_f(t) \coloneqq \int_{t-\Delta_t}^{t}\epsilon(x(\tau))\,\mathrm{d}\tau$. The signals are defined to be zero for $t_{0} \leq t < t_0 + \Delta_t$. Note that, under Assumption~\ref{ass:approx} and provided $x(\tau) \in \mathbb{X}$ for all $\tau \in [t-\Delta_t, t]$, $\|d_{f}(t)\|\le\int_{t-\Delta_t}^{t}\|\epsilon(x(\tau))\|\,\mathrm{d}\tau\le \Delta_t\bar\epsilon \eqqcolon \bar{d}$. The FIR filter was introduced in \cite{SCC.Parikh.Kamalapurkar.ea2019} under the \emph{integral concurrent learning} moniker.

\subsection{Memory Regressor Extension - Classical Approaches}\label{sec:mre}

Incorporating memory into the adaptive update laws is necessary to learn unknown parameters without PE. Premultiplying \eqref{eq:filtered_regression} by $Y_f^\top(t)$ yields
\begin{equation}
Y_f^\top(t)u_f(t) = Y_f^\top(t)Y_f(t)\theta + Y_f^\top(t)d_f(t). \label{eq:premultiplied_regression}
\end{equation}
When $p>n$, the matrix $Y_f^\top(t)Y_f(t)$ is generally rank deficient, and therefore, the AfRE cannot be uniquely solved for $\theta$.

The objective of MRE is to define signals $\mathcal{U}:[t_{0}, \, \infty) \to \mathbb{R}^{p}$ and $\mathcal{Y}:[t_{0}, \, \infty) \to \mathbb{R}^{p \times p}$ such that
\begin{equation}
\mathcal U(t)=\mathcal Y(t)\theta + \mathcal{E}(t),
\label{eq:memory_regression}
\end{equation}
where, unlike the instantaneous Gramian $Y_f^\top(t)Y_f(t)$, the MR $\mathcal Y(t)$ may become full rank under suitable interval excitation conditions (see, e.g., \cite{SCC.Chowdhary2010, SCC.Jha.Roy.ea2019,SCC.Pan.Shi.ea2024}). 

One approach to construct the MRE is to use saturated integrators. Define
\begin{align}
\dot{\mathcal U}(t) &=
\begin{cases}
Y_f^\top(t)u_f(t), & \|\mathcal Y(t)\|\le s_{1} \text{ and } \|\mathcal U(t)\|\le s_{2}, \\
0, & \text{otherwise},
\end{cases}
\\
\dot{\mathcal Y}(t) &=
\begin{cases}
Y_f^\top(t)Y_f(t), & \|\mathcal Y(t)\|\le s_{1} \text{ and } \|\mathcal U(t)\|\le s_{2}, \\
0, & \text{otherwise},
\end{cases}
\\ 
\dot{\mathcal{E}}(t) &=
\begin{cases}
Y_f^\top(t)d_f(t), & \|\mathcal Y(t)\|\le s_{1} \text{ and } \|\mathcal U(t)\|\le s_{2}, \\
0, & \text{otherwise},
\end{cases}
\end{align}
where $s_{1}, s_{2}>0$ are saturation thresholds. With initial conditions $\mathcal U(t_0)=0$, $\mathcal Y(t_0)=0$, and $\mathcal E(t_0)=0$, the regression relation \eqref{eq:memory_regression} is preserved for all $t\ge t_0$. The saturation ensures that $\mathcal{U}$, $\mathcal{Y}$, and $\mathcal{E}$ remain bounded, independent of the system trajectories. For design and analysis of MRE-based update laws and for other approaches to construct the MRE, see \cite{SCC.Jha.Roy.ea2019, SCC.Pan.Shi.ea2024, SCC.Chowdhary2010}. Classical MRE techniques yield an MR that is uniformly positive definite on $[T,\infty)$ for some $T> t_0$ under the FE condition.

\begin{definition}[Uniform PE]
\label{def:PE}
The filtered regressor $Y_f:[t_0,\infty)\to\mathbb{R}^{n\times p}$ is said to be \emph{uniformly persistently exciting} (u-PE) if there exist constants $T>0$ and $\gamma>0$, independent of $t_0$, $x(t_0)$, and $\hat{\theta}(t_0)$, such that for all $t>t_0$,
\begin{equation}
\int_{t}^{t+T} Y_f^\top(\tau)Y_f(\tau)\,\mathrm{d}\tau \succeq \gamma I_p.
\label{eq:PE}
\end{equation}
\end{definition}
\begin{definition}[Uniform FE]\label{def:FE}
The filtered regressor $Y_f:[t_0, \, \infty)\to\mathbb{R}^{n\times p}$ is said to be \emph{uniformly finitely exciting} (u-FE) over $[t_0,t_1]$ if there exists a constant $\gamma>0$, independent of $t_0$, $x(t_0)$, and $\hat{\theta}(t_0)$, such that
\begin{equation}
\int_{t_0}^{t_1} Y_f^\top(\tau)Y_f(\tau)\,\mathrm{d}\tau
\succeq
\gamma I_p.
\label{eq:FE}
\end{equation}
\end{definition}
\begin{assumption}\label{ass:FE}
The filtered regressor $Y_f:[t_0, \, \infty)\to\mathbb{R}^{n\times p}$ is u-FE over some interval $[t_0,t_{1}]$.
\end{assumption}
While the constructions in Section~\ref{sec:mre} yield bounded MRE signals and a uniformly positive definite MR under Assumption \ref{ass:FE}, the resulting update laws cannot track unexpected changes in the parameters due to the need to clear the memory terms corresponding to the previous values of the parameters. A forgetting mechanism is thus needed to keep the MRE alert to unexpected changes in the parameters.
%In the next section, we detail directional forgetting \cite{SCC.Lee.Shin.ea2019}, where information along directions excited by newly arriving regressors is forgotten while preserving the information along directions that are not excited.

\section{Memory Regressor Extension with Alertness}\label{sec:MRE_alert}
% Directional forgetting allows outdated information in the MR to be selectively removed when new excitation becomes available. However, forgetting alone does not guarantee that the stored information remains sufficiently informative for parameter estimation. To ensure parameter convergence without imposing the PE condition, historical data is integrated into the MR $\mathcal{Y}$. The construction process for $\mathcal{Y}$ must satisfy two conflicting objectives: it must be uniformly positive definite and uniformly bounded (to prevent estimator windup). The directional forgetting mechanism developed in this section achieves this balance.

% While MRE schemes with scalar forgetting factors can achieve parameter convergence under excitation conditions weaker than PE \cite{SCC.Jha.Roy.ea2019, SCC.Pan.Shi.ea2024, SCC.Parikh.Kamalapurkar.ea2019}, we show in the next section that uniform variable forgetting cannot simultaneously satisfy positivity and boundedness under FE.
A variable-forgetting scheme is developed in \cite{SCC.Pan.Aranovskiy.ea2019} to guarantee alertness in MRE-based adaptive control under FE, rather than PE. However, as shown in the following section, the MRE construction in \cite{SCC.Pan.Aranovskiy.ea2019} cannot guarantee boundedness of the MR.
\subsection{Loss of Boundedness in Uniform Forgetting Schemes}\label{sec:uniform_forgetting}
 In \cite{SCC.Pan.Aranovskiy.ea2019}, the MRE signals $\mathcal{Y}$ and $\mathcal{U}$ are updated according to
\begin{align} 
\dot{\mathcal{Y}} &= -\beta(t)\mathcal{Y} + Y_f^\top(t) Y_f(t), \quad \mathcal{Y}(t_0) = 0_{p \times p}, \label{eq:uniform_Y} \\ 
\dot{\mathcal{U}} &= -\beta(t)\mathcal{U} + Y_f^\top(t) u_f(t), \quad \mathcal{U}(t_0) = 0_{p \times 1}. \label{eq:uniform_U}
\end{align}
The forgetting rate $\beta$ is selected as a function of $\lambda_{\min}(\mathcal{Y})$ to ensure alertness and avoid the loss of convergence even when excitation is weak. Given thresholds $0<\underline{y}_1<\underline{y}_2$ and a maximum forgetting rate $\beta_{\max}>0$, the alertness forgetting factor in \cite{SCC.Pan.Aranovskiy.ea2019} is defined by
\begingroup
\begin{equation}\label{eq:beta_a}
\beta_{a}(\mathcal{Y}) = \frac{\beta_{\max}}{2}\left(\min\!\left(1,\max\!\left(-1,
\frac{\lambda_{\min}(\mathcal{Y})-\frac{\underline{y}_1+\underline{y}_2}{2}}{\frac{\underline{y}_2-\underline{y}_1}{2}}\right)
\right)+1 \right),
\end{equation}
with $\beta(t)=\beta_a(\mathcal{Y}(t))$.

While the time-varying forgetting parameter in \eqref{eq:beta_a} maintains alertness under FE by preventing the decay of stored information when excitation weakens and forgetting aggressively when the excitation is strong, the following example shows that it may lead to unboundedness of the MR, even when $Y_f$ is essentially bounded and FE. 
%Specifically, consider a regressor that is finitely exciting over a finite interval, but subsequently loses excitation on a subspace of the parameter space. Since the forgetting rate in \eqref{eq:beta_a} is determined solely by the smallest eigenvalue of $\mathcal{Y}$, a single poorly excited direction can force $\beta$ to zero, regardless of the excitation level in the remaining directions, which can lead to instability as shown in the following example.

% \begin{example}[Unboundedness under Partial Excitation]\label{ex:counterexample}
% Let the filtered regressor be given by
% \begin{equation}
% Y_f(t) = \begin{bmatrix} 1 + \sin(t)e^{-t} & 1 + \cos(3t)e^{-t} \end{bmatrix}.
% \end{equation}
% It is straightforward to verify that this regressor is finitely exciting over the interval $[0, 1]$. Consider the corresponding MR, constructed using \eqref{eq:uniform_Y} and \eqref{eq:beta_a} with $\beta_{\max} = 10$, $\underline{y}_1 = 0.05$, and $\underline{y}_2 = 0.2$. As illustrated by the numerical simulation in Figure~\ref{fig:uf_windup}, the lack of persistent excitation in the orthogonal direction causes $\lambda_{\min}(\mathcal{Y})$ to decay to the threshold $\underline{y}_1 = 0.05$. This forces the scalar forgetting factor $\beta$ to collapse near zero. Due to the resulting lack of sufficient forgetting, the maximum eigenvalue grows unbounded, i.e., $\lambda_{\max}(\mathcal{Y}) \to \infty$. 
% \end{example}

\begin{example}[Unboundedness under Partial Excitation]\label{ex:counterexample}
Let the filtered regressor be given by
\begin{equation}
Y_f(t) = \begin{bmatrix} 1 + \sin(t)\varpi(t) & 1 + \cos(3t)\varpi(t) \end{bmatrix},
\end{equation}
where $\varpi(t) \coloneqq \max(0, 1-t)$. It is straightforward to verify that this regressor is FE over the interval $[0, 1]$. Consider the corresponding MR, constructed using \eqref{eq:uniform_Y} and \eqref{eq:beta_a} with $\beta_{\max} = 10$, $\underline{y}_1 = 0.05$, and $\underline{y}_2 = 0.2$. For $t \ge 1$, the regressor becomes exactly the constant rank-1 vector $[1 \;\; 1]$. Consequently, $\lambda_{\min}(\mathcal{Y}(t))$ monotonically decreases until it reaches the threshold $\underline{y}_1 = 0.05$ at some finite time $t = T > 1$. At this instant, $\beta(T) = 0$. Because the residual excitation is exactly zero for $t \ge 1$, $\lambda_{\min}(\mathcal{Y}(t))$ remains exactly $0.05$ and $\beta(t)$ remains zero for all $t \ge T$. Due to the lack of forgetting, the constant excitation in the $[1 \;\; 1]$ direction yields $\lambda_{\max}(\mathcal{Y}(t)) \to \infty$. 
\end{example}

% \begin{figure}[t]
%     \centering
%     \input{figures/failure_example}
%     \caption{Simulation of Example~\ref{ex:counterexample}. As the minimum eigenvalue decays to the threshold $\underline{y}_1$, the scalar forgetting factor $\beta(t)$ is forced into a limit cycle near zero. Consequently, the lack of sufficient forgetting causes the maximum eigenvalue to grow unbounded.}
%     \label{fig:uf_windup}
% \end{figure}

\subsection{Bounded Variable Forgetting}\label{sec:bounded_forgetting}
To keep the MR bounded under variable forgetting, we introduce the thresholds $\overline{y}_1 < \overline{y}_2$, satisfying $0 < \underline{y}_1 < \underline{y}_2 \le \gamma < \overline{y}_1 < \overline{y}_2$, where $\gamma$ is the FE constant in Definition~\ref{def:FE}. We define a bounding forgetting factor $\beta_b(\mathcal{Y})$ to enforce maximal forgetting when the maximum eigenvalue of the MR exceeds $\overline{y}_2$ as
\begin{equation}
\beta_b(\mathcal{Y}) =\frac{\beta_{\max}}{2}\left(\min\left(1,\max\left(-1,\frac{\lambda_{\max}\left(\mathcal{Y}\right) - \frac{\overline{y}_1 + \overline{y}_2}{2}}{\frac{\overline{y}_2 - \overline{y}_1}{2}}\right)\right) + 1\right).
\label{eq:beta_b}
\end{equation}
\endgroup
We then define the forgetting factor using \eqref{eq:beta_a} and \eqref{eq:beta_b} as
\begin{equation}\label{eq:beta_composite}
\beta(t) = \max \Big( \beta_a\big(\mathcal{Y}(t)\big), \, \beta_b\big(\mathcal{Y}(t)\big) \Big),
\end{equation}
to ensure the boundedness of $\mathcal{Y}$ (see Theorem~\ref{thm:Y_bounded}).

Note that the boundedness mechanism in \eqref{eq:beta_composite} may render the MR uninformative under FE. Indeed, if $\lambda_{\max}(\mathcal{Y})$ grows, then the bounding mechanism enforces $\beta=\beta_{\max}$, yielding the decay term $-\beta_{\max}\|\mathcal{Y}\|^2$.  Because this decay acts uniformly across all directions, it attenuates information in weakly excited directions as well as strongly excited ones. Consequently, if $Y_f$ is not PE, $\lambda_{\min}(\mathcal{Y})$ may converge to zero, causing $\mathcal{Y}$ to lose positive definiteness. To ensure uniform positive definiteness of the MR under a condition weaker than PE, we use directional forgetting, first introduced in \cite{SCC.Cao.Schwartz.ea2000}. 

\subsection{ Variable-Rate Directional Forgetting}\label{subsec:VRDF}
Using the time-varying forgetting factor defined in \eqref{eq:beta_composite}, the MRE signals $\mathcal{Y}$ and $\mathcal{U}$ are constructed according to 
\begin{align}
\dot{\mathcal{Y}}(t) =
\begin{cases}
Y_f^\top(t)Y_f(t),
& t \in [t_0, t_1],
\\
Y_f^\top(t)Y_f(t)
-\beta(t)
\dfrac{\mathcal{Y}(t)Y_f^\top(t)Y_f(t)\mathcal{Y}(t)}
{\tr\big(Y_f(t)\mathcal{Y}(t)Y_f^\top(t)\big)},
& t > t_1,
\end{cases}
\label{eq:Y_dot}
\\
\dot{\mathcal{U}}(t) =
\begin{cases}
Y_f^\top(t)u_f(t),
& t \in [t_0, t_1],
\\
Y_f^\top(t)u_f(t)
-\beta(t)
\dfrac{\mathcal{Y}(t)Y_f^\top(t)Y_f(t)}
{\tr\big(Y_f(t)\mathcal{Y}(t)Y_f^\top(t)\big)}\,\mathcal{U}(t),
& t > t_1,
\end{cases}
\label{eq:U_dot}
\end{align}
where the forgetting terms in \eqref{eq:Y_dot}--\eqref{eq:U_dot} are defined to be zero when $Y_f(t)=0$ for $t>t_1$. The differential equations are initialized at $t=t_0$ with $\mathcal{Y}(t_0)=0_{p\times p}$ and $\mathcal{U}(t_0)=0_{p\times1}$.

To guarantee boundedness and uniform positive definiteness of the MR, the following assumption is needed.
\begin{assumption}\label{ass:PE}
There exists a fixed $q$-dimensional subspace $\phi \subseteq \mathbb{R}^p$ within which the regressor $Y_{f}$ is u-PE. Specifically, there exists an orthogonal matrix $U = \begin{bmatrix} U_1 & U_2 \end{bmatrix}$, where $U_1 \in \mathbb{R}^{p \times q}$ spans $\phi$ and $U_2 \in \mathbb{R}^{p \times (p-q)}$ spans the orthogonal complement $\phi^\perp$, such that $Y_{f}(t) U_2 = 0$ for all $t \ge t_1$ and there exist constants $T > 0$ and $\alpha > 0$, independent of $t_0$, $x(t_0)$, and $\tilde{\theta}(t_0)$, such that for all $t \ge t_1$,
\begin{equation}
  \int_{t}^{t+T} U_1^\top Y_{f}^\top(\tau)Y_{f}(\tau) U_1 \, d\tau \succeq \alpha I_q.
\end{equation}
\end{assumption}
\begin{remark}
Note that u-PE of the filtered regressor on a fixed subspace with $q < p$ is weaker than u-PE on $\mathbb{R}^{p}$. Furthermore, while not assumed explicitly in \cite{SCC.Lee.Shin.ea2019}, u-PE on a fixed subspace $\phi$ is needed for the results in \cite{SCC.Lee.Shin.ea2019} to hold. In the following, we provide new proofs of the bounds on the MR that clarify the need for u-PE on a fixed subspace. We also extend the arguments in \cite{SCC.Lee.Shin.ea2019} to the present case where the regressor is matrix-valued rather than vector-valued, and the function being approximated is not linear in the parameters.   
\end{remark}

\subsection{Lower and Upper Bounds on the Memory Regressor}

In this section, we derive lower and upper bounds on the eigenvalues of the MR under suitable excitation conditions on the filtered regressor, utilizing the following technical lemmas.

\begin{lemma}\label{lem:eigderiv_max}
If $M:[0,\infty)\to\mathbb{R}^{n\times n}$ is an absolutely continuous symmetric-matrix-valued function, then $t\mapsto\lambda_{\max}(M(t))$ is absolutely continuous, and for almost every $t$ there exists a unit eigenvector $v(t)$ corresponding to $\lambda_{\max}(M(t))$ such that
\begin{equation}\label{eq:eig_derivative_max}
\frac{\mathrm{d}}{\mathrm{d}t}\lambda_{\max}(M(t))
\le
v(t)^{\top}\dot{M}(t)\,v(t).
\end{equation}
\end{lemma}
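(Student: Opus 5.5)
The plan has two parts. First we show that $t\mapsto\lambda_{\max}(M(t))$ is absolutely continuous, using the Lipschitz property of $\lambda_{\max}$. Then we bound its derivative with a one-sided difference-quotient argument built on the variational characterization $\lambda_{\max}(M)=\max_{\|v\|=1}v^\top M v$.

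\emph{Absolute continuity.} For symmetric $A,B$, Weyl's inequality (or directly the variational formula) gives
\begin{equation*}
|\lambda_{\max}(A)-\lambda_{\max}(B)|\le \|A-B\|,
\end{equation*}
where $\|\cdot\|$ is the spectral norm. So $\lambda_{\max}$ is $1$-Lipschitz on symmetric matrices. Composing a Lipschitz function with the absolutely continuous $M$ gives an absolutely continuous function: for disjoint intervals $(a_i,b_i)$ we have $\sum_i|\lambda_{\max}(M(b_i))-\lambda_{\max}(M(a_i))|\le\sum_i\|M(b_i)-M(a_i)\|$, and the right side is small by absolute continuity of $M$ (entrywise absolute continuity implies it in any norm). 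Hence $\ell(t):=\lambda_{\max}(M(t))$ is differentiable almost everywhere.

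\emph{Derivative bound.} Fix $t$ such that $\ell$ and $M$ are both differentiable at $t$; the set of such $t$ has full measure. Pick any unit eigenvector $v(t)$ for $\lambda_{\max}(M(t))$. For $h>0$,
\begin{equation*}
\ell(t+h)\ge v(t)^\top M(t+h)v(t).
\end{equation*}
Subtracting $\ell(t)=v(t)^\top M(t)v(t)$ and dividing by $h>0$ gives
\begin{equation*}
\frac{\ell(t+h)-\ell(t)}{h}\ge v(t)^\top\frac{M(t+h)-M(t)}{h}v(t),
\end{equation*}
so letting $h\downarrow 0$ yields $\dot\ell(t)\ge v^\top\dot M v$. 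For $h<0$, dividing by $h$ reverses the inequality, and letting $h\uparrow0$ yields $\dot\ell(t)\le v(t)^\top\dot M(t)v(t)$, which is \eqref{eq:eig_derivative_max}. In fact equality holds for every unit maximal eigenvector, so the existence claim is immediate for any choice of $v(t)$. Measurability of $t\mapsto v(t)$ is not needed for the statement as written; if it is needed later, a measurable selection theorem applied to the closed-valued eigenspace map would supply it.

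\emph{Main obstacle.} The only delicate point is that $\ell$ need not be differentiable at eigenvalue crossings. This is handled by restricting to the full-measure set where both derivatives exist, using Rademacher's theorem through absolute continuity, together with the two-sided difference-quotient argument.
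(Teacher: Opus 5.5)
Your proof is correct, but its key step differs from the paper's.

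\textbf{The paper's route.} It takes a maximizing eigenvector $v(t+h)$ at the \emph{perturbed} time and uses $\lambda_{\max}(M(t))\ge v(t+h)^\top M(t)\,v(t+h)$ to bound the forward difference quotient from above. Passing to the limit then requires three things: extracting a convergent subsequence $v(t+h_k)\to v$ on the unit sphere, checking by continuity that $v$ is a maximal eigenvector of $M(t)$, and an appeal to Danskin's theorem. The result is only a one-sided inequality, and only for that particular limit vector.

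\textbf{Your route.} You freeze a maximal eigenvector $v(t)$ at the base time. Then $h\mapsto\lambda_{\max}(M(t+h))-v(t)^\top M(t+h)\,v(t)$ is nonnegative and vanishes at $h=0$. At any $t$ where both $\lambda_{\max}(M(\cdot))$ and $M$ are differentiable, using both signs of $h$ forces its derivative to be zero. This avoids compactness and Danskin altogether. It also gives the stronger conclusion $\frac{\mathrm{d}}{\mathrm{d}t}\lambda_{\max}(M(t))=v^\top\dot M(t)\,v$ for \emph{every} unit maximal eigenvector, which more than covers its use in Theorem~\ref{thm:Y_bounded}. In addition, you prove absolute continuity, via the $1$-Lipschitz dependence of $\lambda_{\max}$ from Weyl's inequality. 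The paper states this property but its proof never establishes it.

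\textbf{What the paper's route buys.} The moving-eigenvector argument extends beyond differentiability points of $\lambda_{\max}(M(\cdot))$. With the same compactness step, it bounds the upper right Dini derivative by $\max_{w} w^\top\dot M(t)\,w$ at every $t$ where $M$ is differentiable. That is useful for comparison arguments that should not depend on a.e.\ differentiability of the eigenvalue. Your two-sided argument needs exactly the full-measure set where both derivatives exist, which is all the lemma asks for.

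One small terminological point: a.e.\ differentiability of the absolutely continuous, but not necessarily Lipschitz, function $\lambda_{\max}(M(\cdot))$ comes from Lebesgue's differentiation theorem, not Rademacher's theorem.
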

\begin{proof}
By the Rayleigh--Ritz characterization of the largest eigenvalue, $\lambda_{\max}(M(t)) = \max_{\|u\|=1} u^\top M(t)\, u$. For any $h>0$, let $v(t+h)$ be a unit eigenvector achieving the maximum at $t+h$, so that $\lambda_{\max}(M(t+h)) = v(t+h)^\top M(t+h)\, v(t+h)$. Since $v(t+h)$ is a unit vector and $\lambda_{\max}(M(t))$ is the maximum of the Rayleigh quotient at time $t$, we have $\lambda_{\max}(M(t)) \ge v(t+h)^\top M(t)\, v(t+h)$.

Subtracting these two inequalities yields
\begin{equation}
\lambda_{\max}(M(t+h)) - \lambda_{\max}(M(t))\le v(t+h)^\top\bigl[M(t+h)-M(t)\bigr]v(t+h).
\end{equation}
Dividing by $h>0$ and taking the limit as $h\to 0^+$, Danskin's Theorem for the directional derivative of a maximum \cite[Appendix B]{SCC.Basar.Bernhard1995} gives \eqref{eq:eig_derivative_max} for almost every $t$.
\end{proof}
% \begin{proof}
% By the Rayleigh--Ritz characterization of the largest eigenvalue, $\lambda_{\max}(M(t)) = \max_{\|u\|=1} u^\top M(t)\, u$. For any $h>0$, let $v(t+h)$ be a unit eigenvector achieving the maximum at $t+h$, so that $\lambda_{\max}(M(t+h)) = v(t+h)^\top M(t+h)\, v(t+h)$. Since $v(t+h)$ is a unit vector and $\lambda_{\max}(M(t))$ is the maximum of the Rayleigh quotient at time $t$, we have $\lambda_{\max}(M(t)) \ge v(t+h)^\top M(t)\, v(t+h)$.

% Subtracting these two inequalities yields
% \begin{multline}
% \lambda_{\max}(M(t+h)) - \lambda_{\max}(M(t))
% \\\le 
% v(t+h)^\top\bigl[M(t+h)-M(t)\bigr]v(t+h).
% \end{multline}
% Dividing by $h>0$ gives
% \begin{multline}\label{eq:eig_diff_quotient}
% \frac{\lambda_{\max}(M(t+h)) - \lambda_{\max}(M(t))}{h} \\\le v(t+h)^\top\left[\frac{M(t+h)-M(t)}{h}\right]v(t+h).
% \end{multline}
% Because the unit sphere is compact, there exists a sequence $h_k\to0^+$ such that $v_{h_k}\to v$ for some unit vector $v$. Since $M(t+h_k)v_{h_k}=\lambda_{\max}(M(t+h_k))v_{h_k}$, the continuity of $M$ and $\lambda_{\max}(M(\cdot))$ implies $M(t)v=\lambda_{\max}(M(t))v$. Thus, $v$ is a unit eigenvector associated with $\lambda_{\max}(M(t))$. Since $M$ is differentiable at $t$, $\frac{M(t+h_k)-M(t)}{h_k}\to\dot{M}(t)$, while differentiability of $\lambda_{\max}(M(\cdot))$ at $t$ gives $\frac{\lambda_{\max}(M(t+h_k))-\lambda_{\max}(M(t))}{h_k}\to\frac{\mathrm{d}}{\mathrm{d}t}\lambda_{\max}(M(t))$. Taking the limit in \eqref{eq:eig_diff_quotient} yields \eqref{eq:eig_derivative_max}.
% \end{proof}

\begin{lemma}\label{lem:full_excitation_tv}
Consider the $q$-dimensional directional-forgetting dynamics
\begin{equation}\label{eq:lemmaDyn}
  \dot A(t) = Q^\top(t)Q(t) - \beta(t)\,\frac{A(t)\,Q^\top(t)Q(t)\,A(t)}{\operatorname{tr}(Q^\top(t)Q(t)A(t))},
\end{equation}
where $A(t_1)\succ 0$ and $0 \le \beta(t) \le \bmax$ for all $t\geq t_1$, and the forgetting term is defined to be zero when $Q(t)=0$. If $Q$ is essentially bounded and u-PE on $[t_1,\infty)$ with constants $T > 0$ and $\gamma > 0$, then there exists a constant $\alpha_1 > 0$ such that
\begin{equation}
  A(t) \succeq \alpha_1 I_q,
\end{equation}
for all $t$ in the interval of existence of the solution of \eqref{eq:lemmaDyn} starting from $(t_1, A(t_1))$.
\end{lemma}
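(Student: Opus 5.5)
The plan is to avoid eigenvalue-derivative arguments entirely. Instead, I would write \eqref{eq:lemmaDyn} as a scalar-rate linear matrix ODE driven by a positive semidefinite forcing term. Then I would lower-bound the solution through the variation-of-constants formula and the u-PE window.

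\textbf{Key matrix inequality.} Whenever $A\succ 0$ and $Q\neq 0$, for every $x\in\R^q$,
\begin{equation*}
x^\top A Q^\top Q A x=\|Q A^{1/2}A^{1/2}x\|^2\le \|QA^{1/2}\|_F^2\,\|A^{1/2}x\|^2=\tr\big(Q A Q^\top\big)\,x^\top A x .
\end{equation*}
In that case $\tr(QAQ^\top)=\tr(Q^\top Q A)>0$, so the forgetting term is well defined. Moreover, $A Q^\top Q A/\tr(Q^\top QA)\preceq A$. Define
\begin{equation*}
S(t)\coloneqq \beta(t)\Big(A(t)-\frac{A(t)Q^\top(t)Q(t)A(t)}{\tr(Q^\top(t)Q(t)A(t))}\Big),
\end{equation*}
and set $S(t)\coloneqq\beta(t)A(t)$ when $Q(t)=0$. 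Then $S(t)\succeq 0$ wherever $A(t)\succ0$, and \eqref{eq:lemmaDyn} becomes
\begin{equation*}
\dot A=-\beta(t)A+Q^\top Q+S .
\end{equation*}

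\textbf{Variation of constants.} Let $\Phi(t,\tau)\coloneqq e^{-\int_\tau^t\beta(s)\,\mathrm ds}$. Since $0\le\beta\le\bmax$, we have $e^{-\bmax(t-\tau)}\le\Phi(t,\tau)\le 1$. Because the equation is linear in $A$ with a scalar rate, the solution satisfies
\begin{equation*}
A(t)=\Phi(t,t_1)A(t_1)+\int_{t_1}^t\Phi(t,\tau)\big(Q^\top(\tau)Q(\tau)+S(\tau)\big)\,\mathrm d\tau .
\end{equation*}
This holds in the Carathéodory sense, which is where essential boundedness of $Q$ (and measurability of $\beta$) enters to make the integrals meaningful.

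\textbf{Main obstacle: circularity.} The step requiring care is that $S\succeq0$ was derived under the hypothesis $A\succ 0$. I would handle this with a continuation argument. Let $t^\ast$ be the supremum of times in the existence interval up to which $A\succ0$; since $A(t_1)\succ0$ and $A$ is continuous, $t^\ast>t_1$. On $[t_1,t^\ast)$ the formula gives $A(t)\succeq e^{-\bmax(t-t_1)}A(t_1)\succ0$. This bound is uniformly positive on bounded intervals, so by continuity $A(t^\ast)\succ0$ if $t^\ast$ lies in the existence interval, contradicting maximality. Hence $A\succ0$ and $S\succeq0$ on the whole interval of existence.

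\textbf{Uniform bound.} Drop the term involving $S$ and split into two time ranges.
\begin{itemize}
\item For $t\in[t_1,t_1+T]$: $A(t)\succeq e^{-\bmax T}\lambda_{\min}(A(t_1))\,I_q$.
\item For $t\ge t_1+T$: keep only the window $[t-T,t]$, where $\Phi(t,\tau)\ge e^{-\bmax T}$. The u-PE hypothesis then gives
\begin{equation*}
A(t)\succeq e^{-\bmax T}\int_{t-T}^{t}Q^\top Q\,\mathrm d\tau\succeq e^{-\bmax T}\gamma I_q .
\end{equation*}
\end{itemize}
Thus the claim holds with
\begin{equation*}
\alpha_1\coloneqq e^{-\bmax T}\min\{\gamma,\lambda_{\min}(A(t_1))\}>0 .
\end{equation*}
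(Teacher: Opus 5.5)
Your proof is correct and is essentially the paper's argument. Both use the same key inequality $AQ^\top QA \preceq \tr(Q^\top QA)\,A$, the same continuation argument keeping $A\succ 0$, the same split at $t_1+T$ with the u-PE window, and the same constant $\alpha_1=e^{-\bmax T}\min\{\gamma,\lambda_{\min}(A(t_1))\}$. The only difference is presentational: you work at the matrix level with an exact variation-of-constants identity, collecting the slack in a positive semidefinite term, whereas the paper fixes a unit vector $v$ and applies the comparison lemma to the scalar inequality $\dot a_v\ge -\bmax a_v+v^\top Q^\top Q v$ for $a_v=v^\top Av$.
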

\begin{proof}
Local existence of solutions to \eqref{eq:lemmaDyn} is guaranteed by the Carath\'{e}odory existence theorem \cite[Section 1.4]{SCC.Coddington.Levinson1955}. Let $[t_1,t^{**})$ be the maximal interval of existence of the solution of \eqref{eq:lemmaDyn} starting from $(t_1, A(t_1))$. Since the ODE preserves symmetry, $A(t)$ is symmetric for all $t\in[t_1,t^{**})$. Let $[t_1,t^\ast)$ be the maximal interval on which $A(t)\succ 0$. This interval is nonempty since $A(t_1)\succ 0 $. On this interval, $A^{1/2}(t)$ exists. We will show that $A(t)\ge\alpha_1 I_q \succ 0$ uniformly on $[t_1,t^\ast)$, which forces $t^\ast=t^{**}$. 

Fix any unit vector $v\in\mathbb{R}^q$, let $S\coloneqq Q^\top Q$, and define the scalar function $a_v(t)=v^\top A(t)v$. Its derivative is
\begin{equation}
  \dot a_v(t) = v^\top S(t)v - \beta(t)\,\frac{v^\top A(t)S(t)A(t)v}{\operatorname{tr}(S(t)A(t))}.
\end{equation}
Let $H(t)=A^{1/2}(t)S(t)A^{1/2}(t)\succeq 0$. Bounding the quadratic form gives 
\begin{multline}
    v^\top A(t)S(t)A(t)v = (A^{1/2}(t)v)^\top H(t)(A^{1/2}(t)v) \\\le \lambda_{\max}(H(t))\,v^\top A(t)v \le \operatorname{tr}(H(t))\,a_v(t) = \operatorname{tr}(S(t)A(t))\,a_v(t).
\end{multline} Therefore,
\begin{equation}
  \dot a_v(t) \ge -\beta(t) a_v(t) + v^\top S(t)v \ge -\bmax a_v(t) + v^\top S(t)v.
\end{equation}
Applying the Comparison Lemma \cite[Lemma 3.4]{SCC.Khalil2002} and the variation-of-constants formula on any interval $[t_1,t^\ast)$ gives
\begin{equation}
  a_v(t) \ge e^{-\bmax(t-s)}a_v(s) + \int_s^t e^{-\bmax(t-\tau)}v^\top S(\tau)v\,d\tau.
\end{equation}
If $t\in[t_1,t_1+T]$, choosing $s=t_1$ yields $a_v(t)\ge e^{-\bmax T}\lambda_{\min}(A(t_1))$. If $t\ge t_1+T$, choosing $s=t-T$ gives $a_v(t) \ge \int_{t-T}^t e^{-\bmax(t-\tau)}v^\top S(\tau)v\,d\tau \ge e^{-\bmax T}v^\top\!\left(\int_{t-T}^t S(\tau)\,d\tau\right)\!v \ge e^{-\bmax T}\gamma$.
Hence, for every unit vector $v$ and every $t\in[t_1,t^\ast)$, $v^\top A(t)v\ge \alpha_1$, where $\alpha_1 = e^{-\bmax T}\min\{\lambda_{\min}(A(t_1)),\gamma\} > 0$. Because $\alpha_1$ is independent of $t^\ast$ and $A$ is continuous, we conclude that $t^\ast=t^{**}$.
\end{proof}

We now use the above lemmas to show boundedness and positive definiteness of the MR under the assumption that the regressor is essentially bounded, FE, and PE over a fixed subspace.

\begin{theorem}\label{thm:Y_bounded}
Suppose Assumptions~\ref{ass:bounds}--\ref{ass:PE} hold and there exist $\tau,\overline{y}_f \geq 0$ such that  $\|Y_f(t)\|\leq \overline{y}_f$, for almost all $t\in[t_0,\tau)$, then there exist constants $\overline{y} > \underline{y} > 0$ such that
\begin{equation}
\mathcal{Y}(t) \preceq \overline{y}\, I_p, \quad \forall t\in[t_0,\tau),  \label{eq:Y_upper}
\end{equation}
In addition, if $\tau > t_1$, then
\begin{equation}
    \underline{y}\, I_p \preceq \mathcal{Y}(t), \quad \forall t\in[t_1,\tau). \label{eq:Y_lower}
\end{equation}
\end{theorem}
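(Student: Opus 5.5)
The argument splits at $t_1$. On $[t_0,\min(t_1,\tau))$ the MR is a plain integral, $\mathcal{Y}(t)=\int_{t_0}^{t}Y_f^\top Y_f\,d\tau$. Hence $\mathcal{Y}(t)\preceq \overline{y}_f^2 (t_1-t_0) I_p$ and $\mathcal{Y}(t)\succeq 0$. At $t_1$, Assumption~\ref{ass:FE} gives $\mathcal{Y}(t_1)\succeq\gamma I_p\succ 0$. For $t>t_1$, the forgetting term $\beta\,\mathcal{Y}Y_f^\top Y_f\mathcal{Y}/\tr(Y_f\mathcal{Y}Y_f^\top)$ is positive semidefinite and bounded whenever $\mathcal{Y}\succ0$. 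We maintain positive definiteness throughout via the lower bound below, so the right-hand side of \eqref{eq:Y_dot} stays well defined and the solution is absolutely continuous.

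\textbf{Upper bound for $t>t_1$.} We apply Lemma~\ref{lem:eigderiv_max} to $\lambda_{\max}(\mathcal{Y})$. With $v$ a unit top eigenvector, $\mathcal{Y}v=\lambda_{\max}v$, so
\[
\tfrac{d}{dt}\lambda_{\max}\le v^\top Y_f^\top Y_f v-\beta\lambda_{\max}^2\frac{\|Y_fv\|^2}{\tr(Y_f\mathcal{Y}Y_f^\top)} .
\]
Whenever $\lambda_{\max}\ge\overline{y}_2$ we have $\beta=\bmax$ by \eqref{eq:beta_b}, and $\tr(Y_f\mathcal{Y}Y_f^\top)\le \lambda_{\max}\|Y_f\|_F^2$. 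Together these give
\[
\tfrac{d}{dt}\lambda_{\max}\le\|Y_fv\|^2\Bigl(1-\bmax\frac{\lambda_{\max}}{\|Y_f\|_F^2}\Bigr).
\]
This is nonpositive once $\lambda_{\max}\ge \overline{y}_f^2\,\mathrm{rank}/\bmax$, up to the Frobenius/spectral constant $\min(n,p)$. Hence $\lambda_{\max}(\mathcal{Y}(t))\le\overline{y}\coloneqq\max\{\overline{y}_f^2(t_1-t_0),\,\overline{y}_2,\,\min(n,p)\overline{y}_f^2/\bmax\}$. The sublevel set is forward invariant by a standard comparison argument on the absolutely continuous function $\lambda_{\max}$. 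Note that Assumption~\ref{ass:PE} is not needed for the upper bound.

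\textbf{Lower bound.} We use the decomposition $U=[U_1\;U_2]$ of Assumption~\ref{ass:PE}. Since $Y_fU_2=0$ for $t\ge t_1$, the quantities $Y_f^\top Y_f=U_1U_1^\top Y_f^\top Y_fU_1U_1^\top$ and $\tr(Y_f\mathcal{Y}Y_f^\top)=\tr(Q^\top Q A)$ satisfy, with $Q\coloneqq Y_fU_1$ and $A\coloneqq U_1^\top\mathcal{Y}U_1$,
\[
\tfrac{d}{dt}(U^\top\mathcal{Y}U)=\begin{bmatrix}Q^\top Q-\beta\frac{AQ^\top QA}{\tr(Q^\top QA)} & -\beta\frac{AQ^\top QB}{\tr}\\ \ast & -\beta\frac{B^\top Q^\top QB}{\tr}\end{bmatrix},
\qquad B\coloneqq U_1^\top\mathcal{Y}U_2 .
\]
The $A$-block is then exactly \eqref{eq:lemmaDyn}. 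Since $A(t_1)\succeq\gamma I_q$ and $Q$ is essentially bounded and u-PE by Assumption~\ref{ass:PE}, Lemma~\ref{lem:full_excitation_tv} gives $A\succeq\alpha_1 I_q$.

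The main obstacle is the unexcited block $C\coloneqq U_2^\top\mathcal{Y}U_2$, together with the cross term $B$. Here $\dot C=-\beta B^\top Q^\top QB/\tr(Q^\top QA)$, which is decreasing, so one must show that $C-B^\top A^{-1}B$ (the Schur complement) stays bounded below. The plan is to compute its derivative. Writing $S=Q^\top Q$ and $\tau_0=\tr(SA)$, we have $\dot A=S-\beta ASA/\tau_0$ and $\dot B=-\beta ASB/\tau_0$. Then $\tfrac{d}{dt}(A^{-1}B)=-A^{-1}\dot AA^{-1}B+A^{-1}\dot B=-A^{-1}SA^{-1}B+\beta SB/\tau_0-\beta SB/\tau_0=-A^{-1}SA^{-1}B$. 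A direct computation then gives
\[
\tfrac{d}{dt}\bigl(C-B^\top A^{-1}B\bigr)=B^\top A^{-1}SA^{-1}B\succeq0 .
\]
The cancellation with the $\beta$ terms is what I expect to verify. If it holds, the Schur complement is nondecreasing, so it stays $\succeq$ its value at $t_1$, which is $\succeq\gamma I$ by the FE block bound. Combining $A\succeq\alpha_1I$, Schur complement $\succeq\gamma I$, and the already proven $\|B\|\le\overline{y}$ yields a uniform $\underline{y}>0$ with $\mathcal{Y}\succeq\underline{y}I_p$ on $[t_1,\tau)$.
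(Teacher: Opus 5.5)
Your proposal is correct and is essentially the paper's proof. It uses the same $\lambda_{\max}$ argument via Lemma~\ref{lem:eigderiv_max} with $\beta=\bmax$ above $\overline{y}_2$, Lemma~\ref{lem:full_excitation_tv} on the excited block, the nondecreasing Schur complement (the $\beta$-terms do cancel, leaving $B^\top A^{-1}SA^{-1}B\succeq 0$), and the block factorization. The differences are minor:
\begin{itemize}
\item You bound $\|A^{-1}B\|\le\overline{y}/\alpha_1$ from the upper bound. The paper instead uses $\dot C\preceq 0$ (in your notation) to get $\alpha_1\|A^{-1}B\|^2\le\|C(t_1)-\Sigma(t_1)\|$, which keeps the lower bound independent of the upper bound.
\item Because your route makes each bound rely on the other, both should be run on the maximal interval where $\mathcal{Y}\succ 0$ to close the bootstrap, as the paper does with $[t_1,t^\ast)$.
\item Your observation that $\Sigma(t_1)\succeq\gamma I$ makes the paper's unnamed $\alpha_2$ explicit.
\end{itemize}
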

\begin{proof}
In the case where $\tau > t_1$, boundedness of $Y_{f}$ ensures existence and uniqueness of solutions of \eqref{eq:Y_dot} on $[t_0,t_1]$. Because $\mathcal{Y}(t_1) = \int_{t_0}^{t_1} Y_{f}^\top(\tau)Y_{f}(\tau)\,d\tau \succeq \gamma I_p \succ 0$ by Assumption~\ref{ass:FE}, the matrix $\mathcal{Y}(t_1)$ is positive definite, activating the forgetting branch at $t_1$. Local existence of solutions of the forgetting branch of \eqref{eq:Y_dot} starting from $(t_1, \mathcal{Y}(t_1))$ is guaranteed by the Carath\'{e}odory existence theorem \cite[Section 1.4]{SCC.Coddington.Levinson1955}. Let $[t_1,t^{**})\subseteq [t_1,\tau)$ be the maximal interval of existence of the forgetting branch of \eqref{eq:Y_dot} starting from $(t_1, \mathcal{Y}(t_1))$. Continuity of $\mathcal{Y}$, along with $\mathcal{Y}(t_1) \succ 0$, guarantees a maximal interval $[t_1,t^\ast) \subseteq [t_1,t^{**})$ on which $\mathcal{Y}(t) \succ 0$.

To establish the uniform lower bound in \eqref{eq:Y_lower}, let $U = [U_1 \;\; U_2] \in \mathbb{R}^{p \times p}$ be the fixed orthogonal matrix from Assumption~\ref{ass:PE}. Let $Q(t) = Y_{f}(t) U_1$ and $S(t) = Q^\top(t)Q(t)$. Decompose the MR as
\begin{equation}\label{eq:decomp}
  U^\top\mathcal{Y}(t)U = \begin{bmatrix} A(t) & C(t) \\ C^\top(t) & D(t) \end{bmatrix},
\end{equation}
where $A(t) \in \mathbb{R}^{q \times q}$, $C(t) \in \mathbb{R}^{q \times (p-q)}$, and $D(t) \in \mathbb{R}^{(p-q) \times (p-q)}$. Since the forgetting branch of \eqref{eq:Y_dot} is active on $[t_1,t^\ast)$, the corresponding block dynamics are
\begin{align}
  \dot A &= S - \beta(t)\frac{ASA}{\tr(SA)}, \label{eq:A_dynamics}\\
  \dot C &= -\beta(t)\frac{ASC}{\tr(SA)}, \\
  \dot D &= -\beta(t)\frac{C^\top SC}{\tr(SA)}.
\end{align}
Since $Q$ is essentially bounded and u-PE by Assumption~\ref{ass:PE}, Lemma~\ref{lem:full_excitation_tv} can be invoked to conclude that there exists $\alpha_1 > 0$ such that $A(t) \succeq \alpha_1 I_q$ for all $t \in [t_1, t^\ast)$.

The Schur complement $\Sigma(t) = D(t) - C^\top(t)A^{-1}(t)C(t)$ satisfies
\begin{equation}
  \dot\Sigma = \dot D - \dot C^\top A^{-1}C - C^\top\dot A^{-1}C - C^\top A^{-1}\dot C = C^\top A^{-1}SA^{-1}C \succeq 0.
\end{equation}
Consequently, $\Sigma(t) \ge \Sigma(t_1)$. Because $\mathcal{Y}(t_1) \succ 0$, $\Sigma(t_1)$ is positive definite; thus, there exists $\alpha_2 > 0$ such that $\Sigma(t) \succeq \alpha_2 I_{p-q}$ for all $t \in [t_1, t^\ast)$. Furthermore, $\dot D \preceq 0$, ensuring $D(t) \preceq D(t_1)$. Defining $B(t) = A^{-1}(t)C(t)$, we obtain
\begin{equation}
  \alpha_1 B^\top(t)B(t) \preceq B^\top(t)A(t)B(t) = C^\top(t)A^{-1}(t)C(t) = D(t) - \Sigma(t) \preceq D(t_1) - \Sigma(t_1).
\end{equation}
This confirms that $B$ is uniformly bounded on $[t_1,t^\ast)$, i.e., there exists $M > 0$ such that $\|B(t)\| \le M$ for all $t\in [t_1,t^\ast)$. The block factorization of $U^\top\mathcal{Y}(t)U$ shows that for every $z \in \R^p$,
\begin{equation}
  z^\top U^\top\mathcal{Y}(t)Uz \ge \min(\alpha_1,\alpha_2) \left\| \begin{bmatrix} I_{q} & B(t) \\ 0 & I_{p-q} \end{bmatrix} z \right\|^2 \ge \frac{\min(\alpha_1,\alpha_2)}{(1+M)^2} \|z\|^2.
\end{equation}
Hence, $\mathcal{Y}(t) \succeq \underline{y}I_p$ on $[t_1,t^\ast)$, where $\underline{y} = \min(\alpha_1,\alpha_2)/(1+M)^2 > 0$. Since $\underline{y}$ is independent of $t^\ast$ and $\mathcal{Y}$ is continuous, we conclude that $t^\ast = t^{**}$.

To establish the upper bound \eqref{eq:Y_upper} over $[t_1, t^{**})$, let $\lambda_{\mathcal{Y}}(t) = \lambda_{\max}(\mathcal{Y}(t))$. By Lemma~\ref{lem:eigderiv_max}, $\lambda_{\mathcal{Y}}$ is absolutely continuous, and for almost every $t$ there exists a unit eigenvector $w(t)$ such that $\dot{\lambda}_{\mathcal{Y}}(t) \le w^\top(t)\dot{\mathcal{Y}}(t)w(t)$. Substituting the forgetting dynamics yields
\begin{equation}\label{eq:lambda_dot}
  \dot{\lambda}_{\mathcal{Y}}(t) \le w^\top(t)Y_{f}^\top(t)Y_{f}(t)w(t)  - \beta(t)\frac{w^\top(t)\mathcal{Y}(t)Y_{f}^\top(t)Y_{f}(t)\mathcal{Y}(t)w(t)}{\tr(Y_{f}(t)\mathcal{Y}(t)Y_{f}^\top(t))}.
\end{equation}
Using $\mathcal{Y}(t)w(t) = \lambda_{\mathcal{Y}}(t)w(t)$ and $\tr(Y_{f}(t)\mathcal{Y}(t)Y_{f}^\top(t)) \le \lambda_{\mathcal{Y}}(t)\|Y_{f}(t)\|^2 \le \lambda_{\mathcal{Y}}(t) p \overline{y}_f^2$, the bound in \eqref{eq:lambda_dot} simplifies to
\begin{equation}
  \dot{\lambda}_{\mathcal{Y}}(t) \le \|Y_{f}(t)w(t)\|^2 \left( 1 - \frac{\beta(t)\lambda_{\mathcal{Y}}(t)}{p\overline{y}_f^2} \right).
\end{equation}
By the design of the time-varying gain, when $\lambda_{\mathcal{Y}}(t) \ge \overline{y}_2$, we have $\beta(t) = \bmax$. Defining $\lambda^\ast = p\overline{y}_f^2 / \bmax$, for almost all $t$ where $\lambda_{\mathcal{Y}}(t) > \lambda^\ast$ and $\lambda_{\mathcal{Y}}(t) \ge \overline{y}_2$, it follows that $\dot{\lambda}_{\mathcal{Y}}(t) \le 0$. Consequently, defining
\begin{equation}
  \overline{y} \coloneqq \max\left( \overline{y}_f^2(t_1-t_0),\; \overline{y}_2,\; \frac{p\,\overline{y}_f^2}{\bmax} \right),
\end{equation}
ensures that $\lambda_{\mathcal{Y}}(t) \leq \overline{y}$. Thus, $\mathcal{Y}(t) \preceq \overline{y}I_p$ for all $t \in [t_1, t^{**})$. Since the bounds \eqref{eq:Y_upper} and \eqref{eq:Y_lower} hold on the maximal interval of existence $[t_1, t^{**})$, the solution $\mathcal{Y}$ can be extended to the entire interval $[t_0,\tau)$ where $Y_f$ is bounded. Hence, $t^{**} \geq \tau$, and $\underline{y}I_p\preceq\mathcal{Y}(t)\preceq\overline{y}I_p$ for all $t\in [t_1,\tau)$. %If $Y_f$ is bounded on $[t_1,\infty)$ then the bounds in \eqref{eq:Y_upper} and \eqref{eq:Y_lower} hold for all $t\geq t_1$.
\end{proof}

\begin{remark}\label{rem:composite_design}
The proof of Theorem~\ref{thm:Y_bounded} shows that a constant forgetting parameter $\beta(t)\equiv\beta_{\max}$ is sufficient to guarantee both the upper and lower eigenvalue bounds after the u-FE interval. Consequently, a time-varying forgetting factor is not required for boundedness and positive definiteness of the MR. Its purpose is instead to improve alertness in response to changes in the parameters. 

%The alertness-based gain $\beta_a(\mathcal{Y}(t))$ reduces forgetting under weak excitation. However, using $\beta(t)=\beta_a(\mathcal{Y}(t))$ alone does not guarantee boundedness. For example, let $p=2$ and suppose $Y_{f}(t)=e_1^\top$ for all $t\ge t_1$. The minimum eigenvalue $\lambda_{\min}(\mathcal{Y}(t))=\mathcal{Y}_{22}(t)$ remains constant. As $\beta_a\to0$, the forgetting term vanishes, while $\dot{\mathcal{Y}}_{11}(t)\approx\|Y_{f}(t)e_1\|^2=1$, so $\mathcal{Y}_{11}(t)\to\infty$. The additional component $\beta_b$ prevents this behavior by activating only when $\lambda_{\max}(\mathcal{Y}(t))\ge\overline{y}_2$. Hence, $\beta(t)=\max\{\beta_a,\beta_b\}$ preserves the adaptive behavior of $\beta_a$ under weak excitation while ensuring the uniform upper bound established in Theorem~\ref{thm:Y_bounded}.
\end{remark}

\subsection{Bounds on the Recursive Residual}\label{sec:residual_bound}

Unlike the results in \cite{SCC.Lee.Shin.ea2019} where the linear parameterization is exact, the dynamics in this paper are not assumed to admit an exact linear parameterization. As such, we must establish uniform bounds on the recursive residual $\mathcal{E}(t)$. The residual $\mathcal{E} \coloneqq \mathcal{U} - \mathcal{Y}\theta$ satisfies
\begin{equation}\label{eq:E_dot}
\dot{\mathcal{E}}=
\begin{cases}
Y_f^{\top}(t)d_f(t), & \text{if } t \in [t_0, t_1],\\
Y_f^\top(t)d_f(t) - \beta(t)
\dfrac{\mathcal{Y}Y_f^\top(t)Y_f(t)}
{\operatorname{tr}\!\big(Y_f(t)\mathcal{Y}Y_f^\top(t)\big)}
\mathcal{E}, & t > t_1,
\end{cases}
\end{equation}
where the forgetting term in \eqref{eq:E_dot} is defined to be zero when $Y_f(t)=0$ for $t>t_1$. The update laws \eqref{eq:Y_dot} and \eqref{eq:U_dot} ensure that $\mathcal{Y}(t)$ remains uniformly bounded and uniformly positive definite under Assumptions~\ref{ass:FE} and \ref{ass:PE}.

\begin{theorem}\label{thm:E_bounded}
%Let Assumptions~\ref{ass:bounds}--\ref{ass:PE} hold. Suppose the filtered regressor $Y_f$ is essentially bounded, i.e., $\|Y_f(t)\| \le \overline{y}_f$ for some $\overline{y}_f > 0$.
If the hypotheses of Theorem \ref{thm:Y_bounded} hold and there exists $\bar{d}\geq 0$ such that $\|d_f(t)\| \le \bar{d}$ for almost all $t\in [t_0,\tau)$, then the residual $\mathcal{E}(t)$ is uniformly bounded for all $t\in [t_0,\tau)$.
\end{theorem}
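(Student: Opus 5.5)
The plan is to split off the interval $[t_0,t_1]$, and then to work in the coordinates $U$ of Assumption~\ref{ass:PE} together with the block structure \eqref{eq:decomp} from the proof of Theorem~\ref{thm:Y_bounded}. The aim is to reduce the unexcited part of $\mathcal{E}$ to an exponentially decaying perturbation.

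\emph{Interval $[t_0,t_1]$.} Here $\mathcal{E}(t)=\int_{t_0}^{t}Y_f^\top d_f\,\mathrm{d}\tau$, so $\|\mathcal{E}(t)\|\le \overline{y}_f\bar d\,(t_1-t_0)$, up to a constant depending on the matrix norm used.

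\emph{Block form for $t>t_1$.} Since $Y_fU_2=0$, we have $U^\top Y_f^\top d_f=[Q^\top d_f;\,0]$. Write $U^\top\mathcal{E}=[E_1;E_2]$ and $S=Q^\top Q$. Then \eqref{eq:E_dot} becomes
\begin{equation*}
\dot E_1 = Q^\top d_f-\beta\frac{ASE_1}{\tr(SA)},\qquad \dot E_2=-\beta\frac{C^\top S E_1}{\tr(SA)}.
\end{equation*}
Introduce $\zeta\coloneqq A^{-1}E_1$, $B\coloneqq A^{-1}C$ (as in Theorem~\ref{thm:Y_bounded}) and the Schur-type variable $\eta\coloneqq E_2-B^\top E_1$. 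A direct computation uses $\tfrac{\mathrm{d}}{\mathrm{d}t}A^{-1}=-A^{-1}SA^{-1}+\beta S/\tr(SA)$ and the $C$-dynamics. In it every $\beta$-term cancels, and I expect
\begin{equation*}
\dot\zeta=-A^{-1}Q^\top\big(Q\zeta-d_f\big),\qquad \dot B=-A^{-1}SB,\qquad \dot\eta=B^\top Q^\top\big(Q\zeta-d_f\big).
\end{equation*}
I will verify these identities first. Once they hold, the forgetting rate no longer appears in the residual analysis.

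\emph{Excited block.} Lemma~\ref{lem:full_excitation_tv} gives $A\succeq\alpha_1 I_q$, and Theorem~\ref{thm:Y_bounded} gives $A\preceq\overline{y}I_q$. Together with u-PE of $Q$ and $\|Q\|\le\overline{y}_f$, the homogeneous system $\dot z=-A^{-1}Sz$ should be uniformly exponentially stable. I will show this with $V=z^\top Az=E_1^\top A^{-1}E_1$. One computes $\dot V=-\|Q\zeta\|^2+2(Q\zeta)^\top d_f-\beta E_1^\top SE_1/\tr(SA)$. The decrease over windows of length $T$ then follows from the standard PE argument: if $\int_t^{t+T}\|Qz\|^2$ were small, $z$ would stay nearly constant on the window, because $\dot z$ is controlled by $\|Qz\|$. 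This contradicts $\int U_1^\top Y_f^\top Y_fU_1\succeq\alpha I_q$.

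Two consequences follow. First, $\zeta$, and hence $E_1=A\zeta$, is bounded: it solves an exponentially stable system driven by the bounded input $A^{-1}Q^\top d_f$. Second, $B$ obeys the same homogeneous dynamics, so $\|B(t)\|\le k e^{-\mu(t-t_1)}\|B(t_1)\|$.

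\emph{Unexcited block.} This is the step I expect to be the main obstacle: $\eta$ has no restoring term, and $Q\zeta-d_f$ need not average to zero, so a naive bound grows linearly in time. The exponential decay of $B$ resolves it, because
\begin{equation*}
\|\dot\eta\|\le\|B\|\,\overline{y}_f\big(\overline{y}_f\|\zeta\|+\bar d\big)
\end{equation*}
is integrable on $[t_1,\tau)$ with a bound independent of $\tau$. Hence $\eta$ is bounded. Then $E_2=\eta+B^\top E_1$ is bounded, since $B$ is bounded by Theorem~\ref{thm:Y_bounded}.

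Combining the two blocks through the orthogonal $U$ gives a uniform bound on $\mathcal{E}$ over $[t_0,\tau)$. All constants depend only on $\overline{y}_f$, $\bar d$, $\alpha$, $T$, $\beta_{\max}$, and the bounds $\underline{y},\overline{y}$.
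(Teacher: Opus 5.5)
Your proposal is correct and follows the paper's proof essentially step for step. It uses the same $U$-block decomposition, the same $\beta$-free transformation $A^{-1}\mathcal{E}_1$ analyzed with $V=\xi^\top A\xi$ and a PE argument, and the same observation that $A^{-1}C$ obeys the identical uniformly exponentially stable homogeneous dynamics. The only difference is minor: you integrate the $\beta$-free variable $\eta=E_2-B^\top E_1$ and then reconstruct $E_2$, whereas the paper integrates $\dot{\mathcal{E}}_2$ directly, using exponential decay of $C=AE_c$ and boundedness of $S/\operatorname{tr}(SA)$ (which follows from $\operatorname{tr}(SA)\ge\alpha_1\operatorname{tr}(S)$).
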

\begin{proof}
Boundedness of $\mathcal{E}$ on $[t_0,\min(t_1,\tau)]$ follows trivially from the boundedness of $Y_f$ and $d_f$ and continuity of $\mathcal{E}$. In the case where $\tau > t_1$, by Theorem~\ref{thm:Y_bounded}, the MR satisfies $\mathcal{Y}(t) \succeq \underline{y}I_{p} \succ 0$ for all $t \ge t_1$. 
%In particular, $\operatorname{rank}(\mathcal{Y}(t))=p$, ensuring that the forgetting branch of \eqref{eq:Y_dot} and \eqref{eq:E_dot} remains persistently active for all $t \ge t_1$. RK: not needed anymore since we changed the update law.

Let $U = [U_1 \;\; U_2]$ be the orthogonal matrix from Assumption~\ref{ass:PE}, $Q = Y_f U_1$, and $S = Q^\top Q$. Apply the decomposition \eqref{eq:decomp} and partition the residual as $\mathcal{E}_1 = U_1^\top\mathcal{E}$, $\mathcal{E}_2 = U_2^\top\mathcal{E}$. Projecting \eqref{eq:E_dot} onto $\phi$ and utilizing $Y_f^\top = U_1 Q^\top$ yields the dynamics for the excited subspace as
\begin{equation}\label{eq:E1_dot}
  \dot{\mathcal{E}}_1 = -\beta(t)\,\frac{A(t)\,S(t)}{\operatorname{tr}(S(t)A(t))}\,\mathcal{E}_1 + Q^\top(t) d_f(t).
\end{equation}
To analyze the stability of \eqref{eq:E1_dot}, we introduce the coordinate transformation $\xi(t) \coloneqq A^{-1}(t)\mathcal{E}_1(t)$. Differentiating $\xi(t)$ yields
\begin{equation}\label{eq:zeta_dot_temp}
    \dot{\xi} = \dot{A}^{-1}\mathcal{E}_1 + A^{-1}\dot{\mathcal{E}}_1.
\end{equation}
Applying the matrix identity $\dot{A}^{-1} = -A^{-1}\dot{A}A^{-1}$ and the block dynamics $\dot{A} = S - \beta(t) \frac{A S A}{\operatorname{tr}(SA)}$ from \eqref{eq:A_dynamics}, the inverse dynamics are given by
\begin{equation}\label{eq:Ainverse_dot}
    \dot{A}^{-1} = -A^{-1} S A^{-1} + \beta(t) \frac{S}{\operatorname{tr}(SA)}.
\end{equation}
Substituting \eqref{eq:E1_dot} and \eqref{eq:Ainverse_dot} into \eqref{eq:zeta_dot_temp}, and using $\mathcal{E}_1 = A \xi$, gives
% \begin{multline}
%     \dot{\xi} = \left( -A^{-1} S A^{-1} + \beta(t) \frac{S}{\operatorname{tr}(SA)} \right) \mathcal{E}_1 \\+ A^{-1} \left( -\beta(t)\frac{A S}{\operatorname{tr}(SA)}\mathcal{E}_1 + Q^\top d_f \right) \\= -A^{-1} S A^{-1} \mathcal{E}_1 + A^{-1} Q^\top d_f.
% \end{multline}
% and using $\mathcal{E}_1 = A \xi$, the dynamics for the transformed state are independent of $\beta(t)$ as
\begin{equation}\label{eq:zeta_dot}
    \dot{\xi} = -A^{-1}(t) S(t) \xi + A^{-1}(t) Q^\top(t) d_f(t).
\end{equation}

To analyze the unforced system $\dot{\xi} = -A^{-1} S \xi$, consider the candidate Lyapunov function $V_\xi(t, \xi) \coloneqq \xi^\top A(t) \xi$. Because $\alpha_1 I_q \preceq A(t) \preceq \overline{y} I_q$ by Theorem~\ref{thm:Y_bounded}, $V_\xi$ is bounded as $\alpha_1 \|\xi\|^2 \le V_\xi(t,\xi) \le \overline{y} \|\xi\|^2$. The Lie derivative of $V_\xi$ along the system trajectories is given by
\begin{equation}\label{eq:V_zetadot}
    \dot{V}_\xi = -\xi^\top S \xi - \beta(t) \frac{\xi^\top A S A \xi}{\operatorname{tr}(SA)}.
\end{equation}
When $Q(t)=0$, the forgetting term is defined as zero. When $Q(t) \neq 0$, $A(t) \succ 0$ implies $\operatorname{tr}(S(t)A(t)) > 0$. Because $\beta(t) \geq 0$ and $A(t)S(t)A(t) \succeq 0$ (since $S(t) = Q^\top(t) Q(t) \succeq 0$), the second term in \eqref{eq:V_zetadot} is non-positive in either case. Thus, $\dot{V}_\xi \le -\xi^\top S(t) \xi$. Since $V_\xi$ is bounded above and below by quadratic functions of $\xi$, and $S(t)$ is bounded and u-PE by Assumption~\ref{ass:PE}, the unforced system is uniformly exponentially stable (UES) on $[t_1,\tau)$ with a decay rate $\alpha_\xi > 0$ using standard contradiction arguments similar to \cite[Theorem 2.5.1]{SCC.Sastry.Bodson1989}. Because $A^{-1}$ and $Q^\top$ are uniformly bounded, the disturbance term satisfies $\|A^{-1}Q^\top d_f\|\leq \frac{1}{\alpha_1}\|Q\|_\infty\bar{d} \eqqcolon \bar{d}_\xi$. Using the exponential stability of the unforced system and the variation-of-constants formula, it follows that
\begin{equation}\label{eq:zeta_bound}
    \|\xi(t)\|
    \leq \iota_1 e^{-\alpha_\xi(t-t_1)}\|\xi(t_1)\|
    + \frac{\iota_1}{\alpha_\xi}\bar{d}_\xi
    < \infty,
\end{equation}
for all $t\in[t_1,\tau)$, where $\iota_1\coloneqq\sqrt{\overline{y}/\alpha_1}$. Since $\mathcal{E}_1(t)=A(t)\xi(t)$ and $A(t)$ is uniformly bounded, $\mathcal{E}_1(t)$ is also uniformly bounded. In particular,
\begin{equation}\label{eq:E1_bound}
    \|\mathcal{E}_1(t)\|
    \leq \overline{y}\iota_1 e^{-\alpha_\xi(t-t_1)}\|\xi(t_1)\|
    + \frac{\overline{y}\iota_1}{\alpha_\xi}\bar{d}_\xi
    < \infty.
\end{equation}

Projecting \eqref{eq:E_dot} onto $\phi^\perp$ yields the unexcited subspace dynamics
\begin{equation}\label{eq:E2_dot}
  \dot{\mathcal{E}}_2 = -\beta(t)\,\frac{C^\top(t)\,S(t)}{\operatorname{tr}(S(t)A(t))}\,\mathcal{E}_1.
\end{equation}
Note that there is no disturbance term in \eqref{eq:E2_dot} because $U_2^\top Y_f^\top = 0$. Furthermore, from the decomposition in Theorem~\ref{thm:Y_bounded}, the cross-coupling matrix $C$ evolves according to $\dot{C} = -\beta(t)\frac{A S}{\operatorname{tr}(SA)} C$. Defining $E_c(t) \coloneqq A^{-1}(t)C(t)$, identical algebraic steps to the $\mathcal{E}_1$ transformation show that $\dot{E}_c = -A^{-1}(t) S(t) E_c$. Therefore, $E_c$ is governed by the exact same UES unforced dynamics as $\xi$. Consequently, $\|E_c(t)\|\leq \iota_1 e^{-\alpha_\xi(t-t_1)}\|E_c(t_1)\|$, for all $t\in[t_1,\tau)$. Since $C(t)=A(t)E_c(t)$ and $A(t)$ is uniformly bounded,
\begin{equation}
    \|C(t)\|
    \leq \overline{y}\iota_1
    e^{-\alpha_\xi(t-t_1)}\|E_c(t_1)\|,
\end{equation}
and hence $C(t)$ converges to zero exponentially.

Since $\mathcal{E}_1$ is bounded by \eqref{eq:E1_bound}, $S/\operatorname{tr}(SA)$ is bounded (being defined as zero when $Q=0$), and $C^\top$ decays exponentially, the right-hand side of \eqref{eq:E2_dot} is bounded by a strictly exponentially decaying envelope as
\begin{equation}
  \|\dot{\mathcal{E}}_2(t)\| \le \iota_3\,e^{-\alpha_\xi(t-t_1)}, \quad\forall t\in [t_1,\tau),
\end{equation}
where $\iota_3 > 0$ is a finite constant dependent on the uniform bounds of $\mathcal{E}_1$ and $S$. Integrating this inequality yields
\begin{equation}
  \|\mathcal{E}_2(t)\| \le \|\mathcal{E}_2(t_1)\| + \frac{\iota_3}{\alpha_\xi} < \infty, \quad\forall t\in [t_1,\tau).
\end{equation}
Because both orthogonal components $\mathcal{E}_1$ and $\mathcal{E}_2$ are uniformly bounded, the total residual $\mathcal{E}(t) = U_1 \mathcal{E}_1(t) + U_2 \mathcal{E}_2(t)$ is uniformly bounded for all $t\in [t_1,\tau)$.
\end{proof}
\subsection{Sparse MRE Update Law}
\label{subsec:sparse_cost}

To promote sparse parameter estimates while maintaining consistency with the accumulated integral data, consider the regularized cost
\begin{equation}\label{eq:J}
    J(\hat{\theta},t)
    \coloneqq
    \frac{1}{2}\hat{\theta}^\top\mathcal{Y}(t)\hat{\theta}
    -\hat{\theta}^\top\mathcal{U}(t)
    +\lambda \|\hat{\theta}\|_1,
\end{equation}
where $\lambda>0$ is the sparsity penalty. The first two terms form a standard quadratic objective derived from the MRE structure. The additional $\ell_1$-regularization term $\lambda \|\hat{\theta}\|_1$ promotes sparsity in $\hat{\theta}$ by penalizing the magnitude of individual parameters, encouraging small or inactive entries to decay toward zero. The adaptive update law for $\hat{\theta}$ is derived to minimize the regularized cost \eqref{eq:J} using gradient descent. Because the $\ell_1$-regularization term is nonsmooth at the origin, a generalized gradient of $J$ is utilized, given by the Clarke subdifferential \cite[Definition 2.2]{SCC.Shevitz.Paden1994}
\begin{equation}\label{eq:partial_J}
    \partial_{\hat{\theta}}J(\hat{\theta},t)
    =
    \mathcal{Y}(t)\hat{\theta}-\mathcal{U}(t)+\lambda\operatorname{SGN}(\hat{\theta}).
\end{equation}
In \eqref{eq:partial_J}, the addition of a vector and a set is understood in the sense of the Minkowski sum, where we use the shorthand $a + A$ to denote $\{a\} + A$ for a vector $a \in \mathbb{R}^p$ and a set $A \subseteq \mathbb{R}^p$. Furthermore, $\operatorname{SGN}(\hat{\theta}) \subseteq \mathbb{R}^p$ denotes the set-valued sign map applied to the vector $\hat{\theta}$, defined as
\begin{equation}
    \operatorname{SGN}(\hat{\theta}) \coloneqq \left\{ \sigma \in \mathbb{R}^p \;\middle|\; \sigma_i \in \operatorname{SGN}(\hat{\theta}_i), \ \forall i = 1, \dots, p \right\},
\end{equation}
where $\operatorname{SGN}(\cdot)$ is defined for $z_i \in \mathbb{R}$ as
\begin{equation}\label{eq:SGN}
    \operatorname{SGN}(z_i)\coloneqq
    \begin{cases}
        \{1\}, & z_i>0,\\
        [-1,1], & z_i=0,\\
        \{-1\}, & z_i<0.
    \end{cases}
\end{equation}
%The first two terms represent the descent direction associated with the MRE, while the signum term biases the estimate toward sparse solutions.

The SP-VDF update law is given by the differential inclusion
\begin{equation}\label{eq:theta_hat_dot}
    \dot{\hat{\theta}} \in 
    \operatorname{Proj}_{\Theta_\varepsilon} \left(\hat{\theta}, \, \psi(t, \hat{\theta}), \, \Gamma \right) - k_{\theta}\lambda \Gamma \operatorname{SGN}(\hat{\theta}) \eqqcolon F(t, \hat{\theta}),
\end{equation}
where $\psi(t, \hat{\theta}) \coloneqq \Gamma Y(x(t))^\top e(t) + k_{\theta}\Gamma (\mathcal{U}(t) - \mathcal{Y}(t)\hat{\theta})$, $k_{\theta}>0$ is the MRE-learning gain, $\Gamma \in \mathbb{R}^{p \times p}$ is a diagonal, positive-definite adaptation gain matrix ($\Gamma = \Gamma^\top \succ 0$), and $\operatorname{Proj}_{\Theta_\varepsilon}(\cdot,\cdot, \cdot)$ denotes the smooth projection operator defined in \cite[Appendix E]{SCC.Krstic.Kanellakopoulos.ea1995}, employed to ensure that the parameter estimates remain bounded. To ensure continuity of the projection operator near the boundary of the parameter set, we introduce a boundary layer of thickness $\varepsilon>0$. 
%As shown in \cite[Appendix E]{SCC.Krstic.Kanellakopoulos.ea1995}, the resulting parameter estimates remain within the expanded set $\Theta_{\varepsilon} \coloneqq \left\{\hat{\theta}\in\mathbb{R}^{p} ,\middle|, \|\hat{\theta}\| \leq r_\theta +\varepsilon \right\}$.}

\section{Stability Analysis}\label{sec:stabilityAmalysis}

In this section, we establish ultimate boundedness of the closed-loop trajectories using nonsmooth Lyapunov analysis together with a contradiction argument.

Let $\zeta \coloneqq [e^\top, \hat{\theta}^\top, \operatorname{vec}(\mathcal{Y})^\top, \mathcal{U}^\top, \operatorname{vec}(Y_f)^\top, u_f^\top]^\top \in \mathbb{R}^{2n + 2p + p^2 + np}$ denote the augmented closed-loop state. The filtered signals in \eqref{eq:filtered_regression} are assumed to be generated by FIR filters for the rest of the analysis. In particular, the FIR filters in the MRE dynamics \eqref{eq:Y_dot}, \eqref{eq:U_dot}, and \eqref{eq:E_dot} depend on the history of $\zeta$. We denote this history by $\zeta_t$, where $\zeta_t: [-\Delta_t,0] \to \mathbb{R}^{2n + 2p + p^2 + np}$ is a continuous function defined as
\begin{equation}
    \zeta_t(s) \coloneqq \zeta(t+s), \quad s \in [-\Delta_t, 0],
\end{equation}
where $\Delta_t$ is the window length introduced in \eqref{eq:int_reg}. Thus, the closed-loop system can be represented as a functional differential inclusion (FDI). The dynamics consist of an accumulation phase and a directional-forgetting phase given by
\begin{equation}\label{eq:augmented_FDI}
    \dot{\zeta}(t) \in \begin{cases} 
    \mathcal{F}_1(t, \zeta_t), & t \in [t_0, t_1], \\ 
    \mathcal{F}_2(t, \zeta_t), & t \in (t_1, \infty), 
    \end{cases}
\end{equation}
where, for $i \in \{1,2\}$, the set-valued map $\mathcal{F}_i$ takes the form
\begin{equation}
    \mathcal{F}_i(t, \zeta_t) \coloneqq G_i(t, \zeta_t) + \begin{bmatrix} 0_{n \times 1} \\ -k_\theta \lambda \Gamma \operatorname{SGN}(\hat{\theta}(t)) \\ 0_{p^2 \times 1} \\ 0_{p \times 1} \\ 0_{np \times 1} \\ 0_{n \times 1} \end{bmatrix}, \quad i \in \{1, 2\},
\end{equation}
and $G_i$ collects the continuous parts of the dynamics, including the tracking-error \eqref{eq:closed_error}, the continuous projected component of \eqref{eq:theta_hat_dot}, the MRE dynamics \eqref{eq:Y_dot}--\eqref{eq:U_dot}, and the FIR definitions following \eqref{eq:int_reg}.
% with $G_i(t, \zeta_t)$ comprising the smooth portion of the closed-loop vector field, defined as
% \begin{equation}
%     G_i(t, \zeta_t) \coloneqq \begin{bmatrix}
%     -Ke(t) + \Delta(t,e(t)) \\
%     \operatorname{Proj}_{\Theta_\varepsilon} \left(\hat{\theta}(t), \, \psi(t, \hat{\theta}(t)), \, \Gamma \right) \\
%     \operatorname{vec}(\dot{\mathcal{Y}}_i(t)) \\
%     \dot{\mathcal{U}}_i(t) \\
%     \operatorname{vec}(\dot{Y}_f(t)) \\
%     \dot{u}_f(t)
%     \end{bmatrix},
% \end{equation}
% where $\dot{\mathcal{Y}}_i$ and $\dot{\mathcal{U}}_i$ correspond to the $i$-th phase of the MR dynamics defined in \eqref{eq:Y_dot} and \eqref{eq:U_dot}.

For each $i \in \{1,2\}$, $G_i$ is locally Lipschitz continuous and hence locally bounded, while $\operatorname{SGN}$ is upper semicontinuous with nonempty, compact, and convex values. Thus, $\mathcal{F}_i$ is an upper semicontinuous Carath\'eodory multifunction satisfying the upper semicontinuity and measurability conditions in (H1) and (H2) of \cite[Theorem~3.1]{SCC.Boudjenah.ea2010}. Since $\mathcal{F}_i$ is locally bounded on bounded sets of continuous history segments, the integrability condition (H3) is satisfied. In addition, the local Lipschitz continuity of $G_i$ and boundedness of $\operatorname{SGN}$ give the local growth bound in (H4). Hence, by \cite[Theorem~3.1]{SCC.Boudjenah.ea2010}, for every admissible continuous initial history $\zeta_{t_0}$ satisfying (H5), there exists an absolutely continuous solution $t \mapsto \zeta(t)$ of \eqref{eq:augmented_FDI} on the interval $[t_0,t_0+\delta)$, for some $\delta>0$. The same argument applies to both the accumulation and directional-forgetting parts of \eqref{eq:augmented_FDI}, and as a result, if a solution exists on $[t_0, t_1]$, then it can be continued to $[t_1,t_1+\delta)$ for some $\delta > 0$.

To facilitate the subsequent stability analysis, we present the following lemma on boundedness of $\hat{\theta}$.
\begin{lemma}\label{lem:theta_bound}
Suppose Assumptions~\ref{ass:bounds}--\ref{ass:PE} hold. If $[t_0, t_0+\delta)$ is the maximal interval of existence of a solution of \eqref{eq:augmented_FDI} such that $\hat{\theta}(t_{0}) \in \Theta_{\varepsilon}\coloneqq \left\{\hat{\theta}\in\mathbb{R}^{p} ,\middle|, \|\hat{\theta}\| \leq r_\theta +\varepsilon \right\}$, then $\Theta_{\varepsilon}$ is positively invariant under the differential inclusion \eqref{eq:theta_hat_dot}, i.e., $\hat{\theta}(t) \in \Theta_{\varepsilon}$ for all $t \in [t_0, t_0+\delta)$.
\end{lemma}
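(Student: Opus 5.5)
The plan is a Lyapunov-type argument on the scalar function $V_\theta(t)\coloneqq \tfrac12\hat{\theta}^\top(t)\Gamma^{-1}\hat{\theta}(t)$, or more simply on $\|\hat{\theta}(t)\|^2$. We show that along any absolutely continuous solution of \eqref{eq:theta_hat_dot} the derivative is nonpositive whenever $\hat{\theta}$ lies on or outside the boundary layer, that is, whenever $\|\hat{\theta}\|\ge r_\theta$. Since solutions of \eqref{eq:augmented_FDI} are absolutely continuous, $\dot{\hat{\theta}}(t)\in F(t,\hat{\theta}(t))$ for almost every $t$. Hence it suffices to bound $\hat{\theta}^\top\Gamma^{-1}\eta$ for every selection $\eta\in F(t,\hat{\theta})$. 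Any such selection splits into the projection term and the term $-k_\theta\lambda\Gamma\sigma$ with $\sigma\in\operatorname{SGN}(\hat{\theta})$, and we treat the two pieces separately.

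For the projection term, we invoke the defining property of the smooth projection operator of \cite[Appendix E]{SCC.Krstic.Kanellakopoulos.ea1995}. Concretely, we use the convex function $\mathcal{P}(\hat{\theta})=(\|\hat{\theta}\|^2-r_\theta^2)/(\varepsilon^2+2\varepsilon r_\theta)$, whose gradient is parallel to $\hat{\theta}$. The operator then satisfies $\nabla\mathcal{P}(\hat{\theta})^\top\operatorname{Proj}_{\Theta_\varepsilon}(\hat{\theta},\psi,\Gamma)\le 0$ whenever $\mathcal{P}(\hat{\theta})\ge 1$, i.e.\ on $\|\hat{\theta}\|= r_\theta+\varepsilon$. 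In the layer it scales the outward component by $\mathcal{P}\le 1$, which gives exactly zero outward motion at $\mathcal{P}=1$. Written in the $\Gamma^{-1}$ metric, the outward component is removed in the sense $\hat{\theta}^\top\Gamma^{-1}\operatorname{Proj}(\cdot)\le 0$ at the outer boundary. This uses that $\Gamma$ is diagonal, so the standard computation $\hat{\theta}^\top\Gamma^{-1}(\Gamma\psi-c\,\Gamma\hat{\theta}\hat{\theta}^\top\Gamma\psi/\hat{\theta}^\top\Gamma\hat{\theta})=(1-c)\hat{\theta}^\top\psi$ goes through with $c=\mathcal{P}$. Because $\psi$ is continuous in $t$ along the solution and $\hat\theta$ stays bounded on compact subintervals, the projection output is well defined.

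For the sign term, take $\sigma\in\operatorname{SGN}(\hat{\theta})$. Then $\hat{\theta}^\top\Gamma^{-1}(-k_\theta\lambda\Gamma\sigma)=-k_\theta\lambda\,\hat{\theta}^\top\sigma=-k_\theta\lambda\|\hat{\theta}\|_1\le 0$. This holds because $\hat{\theta}_i\sigma_i=|\hat{\theta}_i|$ for every admissible $\sigma_i$, including $\sigma_i\in[-1,1]$ when $\hat{\theta}_i=0$. Hence every selection of $F$ satisfies $\tfrac{d}{dt}V_\theta\le 0$ almost everywhere on the set $\{\|\hat{\theta}\|=r_\theta+\varepsilon\}$ (in fact on the whole layer the sign term only helps).

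The remaining step, and the main obstacle, is to turn this boundary inequality into invariance for a differential inclusion with a non-Euclidean metric. The level sets of $V_\theta$ are ellipsoids, not the ball $\Theta_\varepsilon$. So we would instead run the argument with $W(\hat{\theta})=\|\hat{\theta}\|^2$ directly. For the projection part we use that the operator, as defined in \cite{SCC.Krstic.Kanellakopoulos.ea1995}, satisfies $\hat{\theta}^\top\Gamma^{-1}\operatorname{Proj}\le 0$ at the boundary. When $\Gamma$ is not a multiple of the identity this does not immediately give $\hat\theta^\top\operatorname{Proj}\le0$, and we would need to check the operator's formula. If that fails, we would note that the paper's $\Theta_\varepsilon$ should be read as the sublevel set $\mathcal{P}\le 1$ in the operator's own geometry. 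For the sign part, $\hat{\theta}^\top\Gamma\sigma=\sum_i\Gamma_{ii}|\hat{\theta}_i|\ge 0$, using diagonality of $\Gamma$. We would then argue by contradiction. Suppose $\|\hat{\theta}(t_2)\|>r_\theta+\varepsilon$ for some $t_2$. Let $t_a$ be the last time before $t_2$ with $\|\hat{\theta}\|=r_\theta+\varepsilon$. On $(t_a,t_2]$ the derivative of $W$ is almost everywhere $\le 0$: the projection scaling there is at least $1$, so the outward component is nonpositive, and the sign term is nonpositive. Integrating $W$, which is absolutely continuous, gives $W(t_2)\le W(t_a)$, a contradiction.
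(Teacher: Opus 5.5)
Your overall plan works, but the step it hinges on is left unproved, and the computation you give for it is wrong. The identity $\hat{\theta}^\top\Gamma^{-1}\operatorname{Proj}=(1-c)\,\hat{\theta}^\top\psi$ holds for all $\hat{\theta},\psi$ only when $\Gamma$ is a multiple of the identity. The correction term actually contributes $c\,\|\hat{\theta}\|^2\,\hat{\theta}^\top\Gamma\psi/(\hat{\theta}^\top\Gamma\hat{\theta})$, not $c\,\hat{\theta}^\top\psi$, and diagonality does not rescue it. For example, in your notation $\Gamma=\operatorname{diag}(1,10)$, $\hat{\theta}\propto(1,1)^\top$, $\psi=(1,-0.09)^\top$ activates the correction and gives a strictly positive value at $c=1$. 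So $V_\theta$ need not be nonincreasing on $\partial\Theta_\varepsilon$.

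The property the operator of \cite[Appendix E]{SCC.Krstic.Kanellakopoulos.ea1995} really has is Euclidean. Write $\tau$ for the unprojected update ($\Gamma\psi$ in your notation, $\psi$ in the paper's). When the correction is active, $\hat{\theta}^\top\operatorname{Proj}=\hat{\theta}^\top\tau-c\,(\hat{\theta}^\top\Gamma\hat{\theta})(\hat{\theta}^\top\tau)/(\hat{\theta}^\top\Gamma\hat{\theta})=(1-c)\,\hat{\theta}^\top\tau$. When it is inactive at a point with $\|\hat{\theta}\|\ge r_\theta$, $\hat{\theta}^\top\operatorname{Proj}=\hat{\theta}^\top\tau\le 0$. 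The $\Gamma$-weighted denominator is chosen precisely so that the Euclidean normal component is scaled by $1-c$.

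This is the inequality $n(\hat{\theta})^\top\operatorname{Proj}\le 0$ on $\partial\Theta_\varepsilon$ that the paper takes from Lemma E.1. It holds for every $\Gamma\succ 0$, and it is exactly what your $W=\|\hat{\theta}\|^2$ argument needs. So your concern resolves in your favor: the invariant set really is the Euclidean ball. You should drop the fallback of reinterpreting $\Theta_\varepsilon$, since it would prove a different statement. Diagonality of $\Gamma$ is needed only for the sign term, $\hat{\theta}^\top\Gamma\sigma=\sum_i\Gamma_{ii}|\hat{\theta}_i|\ge 0$, as in the paper.

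With this fixed, your two ingredients are the paper's, and the difference is only in how invariance is concluded. The paper checks $n(\hat{\theta})^\top\rho\le 0$ for all $\rho\in F(t,\hat{\theta})$ on $\partial\Theta_\varepsilon$ alone and invokes a viability theorem. You instead take a last exit time $t_a$ and integrate the absolutely continuous $W$ over $(t_a,t_2]$.

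Your route needs the sign condition on the whole exterior $\|\hat{\theta}\|>r_\theta+\varepsilon$. That holds for the natural extension of the operator there: capping $c$ at $1$ gives $\hat{\theta}^\top\operatorname{Proj}=0$ in the active case, and leaving it uncapped gives $(1-c)\,\hat{\theta}^\top\tau<0$. In return, you get invariance for every solution by an elementary argument.

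The paper's route only needs the operator on $\Theta_\varepsilon$ itself. However, a boundary-only tangency condition for an upper semicontinuous inclusion with the discontinuous $\operatorname{SGN}$ term gives strong rather than weak invariance only under extra structure on $F$, such as a (one-sided) Lipschitz property. That structure holds here but is not spelled out in the paper.
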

\begin{proof}
Let $\partial \Theta_{\varepsilon}$ and $T_{\Theta_{\varepsilon}}(\hat{\theta})$ denote the boundary and tangent cone of $\Theta_{\varepsilon}$, respectively. Since $\Theta_{\varepsilon}$ is a closed ball centered at the origin, the outward normal at any boundary point $\hat{\theta} \in \partial \Theta_{\varepsilon}$ is $n(\hat{\theta}) = \frac{\hat{\theta}}{\|\hat{\theta}\|}$. From \eqref{eq:theta_hat_dot}, it follows that for any $\rho \in F(t, \hat{\theta})$, there exists a $\sigma \in \operatorname{SGN}(\hat{\theta})$ such that
\begin{equation}
\rho = \operatorname{Proj}_{\Theta_\varepsilon}\Big(\hat{\theta}, \, \psi(t, \hat{\theta}), \, \Gamma\Big) - k_\theta \lambda \Gamma \sigma.
\end{equation}

By the properties of the smooth projection operator \cite[Lemma E.1]{SCC.Krstic.Kanellakopoulos.ea1995}, for $\hat{\theta} \in \partial \Theta_{\varepsilon}$, the projected vector field satisfies
\begin{equation}\label{eq:first_ineq}
n(\hat{\theta})^\top \operatorname{Proj}_{\Theta_\varepsilon}\Big(\hat{\theta}, \psi(t, \hat{\theta}), \Gamma\Big) \le 0.
\end{equation}

For the regularizing term $\nu \coloneqq -k_\theta \lambda \Gamma \sigma$, the positive definiteness and diagonal structure of $\Gamma$ ensure $\hat{\theta}^\top \Gamma \sigma = \sum_i \Gamma_{ii} |\hat{\theta}_i| \ge \lambda_{\min}(\Gamma) \|\hat{\theta}\|_1$. Since the origin lies in the interior of $\Theta_\epsilon$, $\nu$ points toward the origin. In particular, for $\hat{\theta} \in \partial \Theta_{\varepsilon}$ and $\sigma \in \operatorname{SGN}(\hat{\theta})$,
\begin{equation}\label{eq:second_ineq}
n(\hat{\theta})^\top \nu
= -(k_\theta \lambda) \left( \frac{\hat{\theta}}{\|\hat{\theta}\|} \right)^\top \Gamma \sigma \le -(k_\theta \lambda) \lambda_{\min}(\Gamma) \frac{\|\hat{\theta}\|_1}{\|\hat{\theta}\|} \le 0.
\end{equation}
Combining \eqref{eq:first_ineq} and \eqref{eq:second_ineq}, we obtain $n(\hat{\theta})^\top \rho \le 0$ for all $\rho \in F(t, \hat{\theta})$ on $\partial \Theta_{\varepsilon}$, implying $\rho \in T_{\Theta_{\varepsilon}}(\hat{\theta})$. By standard viability conditions, e.g., \cite[Theorem~7]{SCC.Aubin.Cellina1984b}, $\Theta_{\varepsilon}$ is positively invariant on the interval of existence $[t_0, t_0+\delta)$.
\end{proof}

%We next show that the solution is forward complete, i.e., $\delta=\infty$.
Since the true parameter satisfies $\theta \in \Theta$ by Assumption~\ref{ass:theta_bounded}, we conclude that $\|\tilde{\theta}(t)\| \le \|\hat{\theta}(t)\| + \|\theta\| \le 2r_\theta + \varepsilon$, and thus $\tilde{\theta}(t) \in \overline{B}(0,2r_\theta + \varepsilon)$ for all $t \in [t_0, t_0+\delta)$.

Next, we define $\Delta(t,e) \coloneqq Y\big(e + x_d(t)\big)\tilde{\theta}(t) + \epsilon\big(e + x_d(t)\big)$, so the error dynamics \eqref{eq:closed_error} can be written as
\begin{equation}\label{eq:error_field}
    \dot e = -Ke + \Delta(t,e).
\end{equation}
%Since Lemma \ref{lem:theta_bound} shows that $\tilde{\theta}$ remains bounded on the interval of existence of solutions of \eqref{eq:augmented_FDI} and  $\overline{x}_d \coloneqq \sup_{t\ge t_0}\|x_d(t)\| < \infty$ (Assumption \ref{ass:trajectory}), modeling the influence of $\tilde{\theta}$ on $e$ as a bounded disturbance term $\Delta$ on the interval of existence is justified. 
Choose $r_e>0$ such that $\overline{B}(0,r_e+\overline{x}_d)\subseteq \mathbb{X}$, where $\overline{x}_d \coloneqq \sup_{t\ge t_0}\|x_d(t)\| < \infty$ by Assumption \ref{ass:trajectory}. Then, if $e(t) \in \overline{B}(0,r_e)$, the system state satisfies $x(t) \in \overline{B}(0,r_e+\overline{x}_d) \subseteq \mathbb{X}$. Because the regressor matrix $Y$ is continuous and evaluated on a compact set, there exists a constant $\overline{Y}>0$ such that $\sup_{\|x\| \le r_e+\overline{x}_d} \|Y(x)\| \le \overline{Y}$. 
Since $\tilde{\theta}$ remains bounded on the interval of existence $[t_0,t_0+\delta)$ of solutions of \eqref{eq:augmented_FDI}, the disturbance term is bounded by
\begin{equation}\label{eq:Delta}
    \sup_{(t,e)\in[t_0,t_0+\delta)\times \overline{B}(0,r_e)}\|\Delta(t,e)\| \le \overline{\Delta} \coloneqq (2r_\theta + \varepsilon)\overline{Y} + \overline{\epsilon}.
\end{equation}
The following lemma establishes forward completeness of solutions of \eqref{eq:augmented_FDI}.
\begin{lemma}\label{lem:init_bounded}
Suppose Assumptions \ref{ass:bounds}--\ref{ass:PE} hold and $\hat{\theta}(t_0)\in\Theta_{\epsilon}$. Let $K=K^\top\succ 0$ and define $k \coloneqq \lambda_{\min}(K)$. If the gain conditions
\begin{equation}\label{eq:gain_condition_e}
\|e(t_0)\| < r_e \quad \text{and} \quad \frac{\overline{\Delta}}{k} < r_e
\end{equation}
are satisfied, then all maximal solutions of \eqref{eq:augmented_FDI} are forward complete, and the tracking error satisfies $e(t) \in \overline{B}(0,r_e)$ for all $t \ge t_0$. %
\end{lemma}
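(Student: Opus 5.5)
We combine a Lyapunov barrier argument on the tracking error with a continuation argument for the full augmented state $\zeta$. We first show that $e$ cannot leave $\overline{B}(0,r_e)$ on the maximal interval of existence $[t_0,t_0+\delta)$. We then show that every component of $\zeta$ stays bounded there, so $\delta=\infty$.

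\textbf{Step 1: invariance of the error ball.} Take $V_e(e)\coloneqq\frac{1}{2}e^\top e$. By Lemma~\ref{lem:theta_bound}, $\hat{\theta}(t)\in\Theta_\varepsilon$ on $[t_0,t_0+\delta)$, so \eqref{eq:Delta} holds whenever $e(t)\in\overline{B}(0,r_e)$. Along \eqref{eq:error_field}, for almost every $t$ with $\|e(t)\|\le r_e$,
\begin{equation*}
\dot V_e \le -k\|e\|^2+\overline{\Delta}\|e\| = -\|e\|\big(k\|e\|-\overline{\Delta}\big).
\end{equation*}
Suppose $e$ leaves $\overline{B}(0,r_e)$. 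By continuity and $\|e(t_0)\|<r_e$, there is a first time $t^\dagger$ with $\|e(t^\dagger)\|=r_e$. Just before $t^\dagger$ the trajectory lies in the ball, so the estimate above applies. Because $\overline{\Delta}/k<r_e$, there is a neighborhood of the sphere on which $\dot V_e<0$. Hence $V_e$ is strictly decreasing on a left neighborhood of $t^\dagger$, which contradicts $V_e$ increasing to $r_e^2/2$ there. Therefore $\|e(t)\|<r_e$ on $[t_0,t_0+\delta)$. Consequently $x(t)\in\overline{B}(0,r_e+\overline{x}_d)\subseteq\mathbb{X}$, and the sharper bound $\|e(t)\|\le\max(\|e(t_0)\|,\overline{\Delta}/k)$ also holds.

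\textbf{Step 2: boundedness of the remaining states.} Since $x$ stays in the compact set $\mathbb{X}$, the following quantities are uniformly bounded: $Y(x)$, $g(x)$, $g^+(x)$ (continuous on compact sets), $x_d$ and $\dot x_d$ (Assumption~\ref{ass:trajectory}), and $\hat\theta$. Hence $u$ in \eqref{eq:control} is bounded. The FIR definitions following \eqref{eq:int_reg} then give bounds on $Y_f$, on $u_f$ (with $\|x(t)-x(t-\Delta_t)\|\le 2(r_e+\overline{x}_d)$), and on $\|d_f\|\le\bar d$, all for $t\in[t_0,t_0+\delta)$. Theorem~\ref{thm:Y_bounded} with $\tau=t_0+\delta$ bounds $\mathcal{Y}$ above and below, and Theorem~\ref{thm:E_bounded} bounds $\mathcal{E}$. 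Because $\mathcal{U}=\mathcal{Y}\theta+\mathcal{E}$ and $\theta\in\Theta$, $\mathcal{U}$ is bounded as well. All these bounds are independent of $\delta$.

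\textbf{Step 3: continuation.} All components of $\zeta$ lie in a compact set independent of $\delta$, and $\mathcal{F}_i$ is locally bounded. Therefore $\dot\zeta$ is essentially bounded, so $\zeta$ is uniformly continuous and $\lim_{t\to t_0+\delta}\zeta(t)$ exists when $\delta<\infty$. The local existence result \cite[Theorem~3.1]{SCC.Boudjenah.ea2010}, applied to the corresponding history segment, then extends the solution past $t_0+\delta$. This contradicts maximality, so $\delta=\infty$; at $t_1$ the extension passes from the accumulation phase to the forgetting phase. Combined with Step 1, this gives $e(t)\in\overline{B}(0,r_e)$ for all $t\ge t_0$.

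The main obstacle is the circularity in Steps 1 and 2. The bound $\overline{\Delta}$ and the filter bounds are valid only while $x\in\mathbb{X}$, and that is exactly what is being proved. The first-exit-time argument in Step 1 breaks the loop, because it needs only $\hat\theta\in\Theta_\varepsilon$ (from Lemma~\ref{lem:theta_bound}) and $\|e\|\le r_e$ up to $t^\dagger$. For the history-dependent FIR terms near $t_0$, where the signals are defined as zero, the histories involved are also confined to the ball up to $t^\dagger$.
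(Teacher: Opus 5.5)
Your proof is correct and follows essentially the same route as the paper: a first-exit contradiction for $\frac{1}{2}\|e\|^2$ on $\overline{B}(0,r_e)$, followed by $\delta$-independent bounds on $Y_f,u_f,d_f,\mathcal{Y},\mathcal{E},\mathcal{U}$ via Lemma~\ref{lem:theta_bound}, Theorems~\ref{thm:Y_bounded} and \ref{thm:E_bounded}, and $\mathcal{U}=\mathcal{Y}\theta+\mathcal{E}$. The differences are minor: the paper bounds $\|e\|$ by $\max\{\|e(t_0)\|,\overline{\Delta}/k\}$ via Young's inequality and the comparison lemma, where you use the sign of $\dot V_e$ near the sphere, and your Step~3 spells out the continuation argument that the paper compresses into ``precompact maximal solutions are forward complete.''
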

\begin{proof}
Let $\zeta(\cdot)$ denote a maximal solution of \eqref{eq:augmented_FDI}, starting from an initial condition that satisfies $\hat{\theta}(t_0)\in \Theta_\varepsilon$ and $e(t_0)\in B(0,r_e)$. Consider the candidate Lyapunov function $W:\mathbb{R}^n \to \mathbb{R}_{\ge 0}$ defined by
\begin{equation}\label{eq:W}
W(e) \coloneqq \frac{1}{2}\|e\|^2.
\end{equation}
Using \eqref{eq:error_field}, the time derivative of \eqref{eq:W} along the solution $\zeta(\cdot)$ exists for almost all $t \in [t_{0}, \, t_{0} + \delta)$ and satisfies
\begin{equation}
\dot{W}(t,e(t)) = -e^\top K e + e^\top \Delta(t,e) \le -k\|e(t)\|^2 + \|e(t)\|\,\|\Delta(t,e(t))\|.
\end{equation}
Young's inequality can be applied for almost all $t \in [t_{0}, \, t_{0} + \delta)$ where the solution is defined to yield
\begin{equation}
\dot{W}(t,e(t)) \le -k W(e(t)) + \frac{\|\Delta(t,e(t))\|^2}{2k}.
\end{equation}

In the following, we use a contradiction argument to show that the $e-$component of $\zeta$ cannot leave $\overline{B}(0,r_{e})$. For the sake of contradiction, assume that there exists some time $T \in (0, \delta)$ such that $e(t_{0} + T) \notin \overline{B}(0, r_{e})$.

Since $e(\cdot)$ is continuous and $e(t_0) \in B(0, r_{e})$, the Intermediate Value Theorem (IVT) \cite[Theorem 4.23]{SCC.Rudin1976} guarantees the existence of time instances $\rho_{1} \in (0,T)$ and $\rho_{2} \in (\rho_{1}, T]$ such that $e(t) \in \overline{B}(0, r_{e})$ for all $t \in [t_0, t_0+\rho_{1}]$ and $e(t) \notin \overline{B}(0, r_{e})$ for all $t \in (t_0+\rho_{1}, t_0+\rho_{2})$. Thus, for almost all $t \in [t_0, t_0+\rho_{1}]$, we have
\begin{equation}
    \dot{W}(t,e(t)) \le - k W(e(t)) + \frac{ \bar{\Delta}^2 }{2k}.
\end{equation}

Applying the Comparison Lemma (Lemma 3.4 in \cite{SCC.Khalil2002}), we get
\begin{equation}
W(e(t)) \le W(e(t_0))e^{-k(t-t_0)} + \frac{\bar\Delta^2}{2k^2}\big(1-e^{-k(t-t_0)}\big),
\end{equation}
which implies
\begin{equation}
\|e(t)\| \le \max\left\{\|e(t_0)\|,\frac{\bar\Delta}{k}\right\},
\quad \forall t \in [t_0, t_{0} + \rho_{1}].
\end{equation}

If the gain condition in \eqref{eq:gain_condition_e} is satisfied, then there exists a constant $\varsigma>0$ such that for all $t \in [t_0, t_{0} + \rho_{1}]$, $\|e(t)\| < r_e - \varsigma$, which, along with $e(t) \notin \overline{B}(0, r_{e})$ for all $t \in (t_0+\rho_{1}, t_0+\rho_{2})$ contradicts the continuity of $e$ at $t_0+\rho_1$. Therefore, our assumption that there exists $T \in (0, \delta)$ such that $e(t_0+T) \notin \overline{B}(0, r_{e})$ is false. We thus conclude that $e(t) \in \overline{B}(0, r_{e})$ for all $t \in [t_0, t_0+\delta)$. 

To establish forward completeness, first note that $e(\cdot)$ and $\hat{\theta}(\cdot)$ are uniformly bounded on $[t_0,t_0+\delta)$, with the latter following from Lemma~\ref{lem:theta_bound}. Since $x(t)=e(t)+x_d(t)$ and $x_d(\cdot)$ is bounded, the state $x(\cdot)$ is also bounded. The filtered signals $Y_f(\cdot)$, $u_f(\cdot)$, and $d_f(\cdot)$ are generated by finite integrals driven by bounded continuous functions of $x(\cdot)$, and are therefore bounded on $[t_0, t_0+\delta)$. Because $Y_f(\cdot)$ and $d_f(\cdot)$ are  bounded, Theorems~\ref{thm:Y_bounded} and \ref{thm:E_bounded} imply that the memory regressor $\mathcal{Y}(\cdot)$ and the recursive residual $\mathcal{E}(\cdot)$ are bounded on $[t_0,t_0+\delta)$. Furthermore, since $\mathcal{U}(t) = \mathcal{Y}(t)\theta + \mathcal{E}(t)$ for all $t\in[t_0,t_0+\delta)$, it follows that $\mathcal{U}(\cdot)$ is also bounded on $[t_0,t_0+\delta)$. Furthermore, since the bounds on $\hat{\theta}$ and $e$ are independent of $\delta$, the solution $t \mapsto \zeta(t)$ is bounded on $[t_0,t_0+\delta)$, with a bound that is independent of $\delta$. The solution is thus precompact.
% A solution being precompact means the set \overline{\zeta(t)\mid t\in I} is compact on every interval of existence I.
Since precompact maximal solutions are forward complete, we conclude that the maximal interval of existence is $[t_0, \infty)$. Since $\zeta$ was arbitrary, the result holds strongly, i.e., for all solutions of \eqref{eq:augmented_FDI} that satisfy $\hat{\theta}(t_0)\in \Theta_\varepsilon$ and $e(t_0)\in B(0,r_e)$.
\end{proof}

We now establish the ultimate bound on the error state $z \coloneqq [e^\top,\tilde{\theta}^\top]^\top\in\mathbb{R}^{n+p}$. In the theorem below, the treatment of all other components of $\zeta$ as bounded time-varying signals is justified by their boundedness established in Lemma \ref{lem:init_bounded}.
\begin{theorem}\label{thm:main}
    Under the hypothesis of Lemma \ref{lem:init_bounded}, any solution of \eqref{eq:augmented_FDI} such that $\hat{\theta}(t_0)\in \Theta_\varepsilon$ and $e(t_0)\in B(0,r_e)$ satisfies 
    \begin{equation}\label{eq:UUBound}
        \limsup_{t \to \infty} \|z(t)\| \le 
        \underline{v}^{-1}\!\left(\overline{v}\left(\sqrt{\iota/\kappa}\right)\right),
    \end{equation}
    where $\overline{v}$, $\underline{v}$ are class $\mathcal{K}_{\infty}$ functions defined in \eqref{eq:Vbounds}, $\kappa \coloneqq \min\{\frac{k}{2}, \frac{k_{\theta}\underline{y}}{2}\}$, $\iota\coloneqq \frac{\overline{\epsilon}^2}{2k} +\frac{k_{\theta}(\overline{\mathcal{E}}+\lambda\sqrt{p})^2}{2\underline{y}}$, with $\underline{y} > 0$ being the lower bound such that $\mathcal{Y}(t) \succeq \underline{y} I_p$ for all $t \ge t_1$ established in Theorem \ref{thm:Y_bounded}, and $\overline{\mathcal{E}} \coloneqq \sup_{t \in \mathbb{R}_{\ge 0}} \|\mathcal{E}(t)\| < \infty$.
\end{theorem}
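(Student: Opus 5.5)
Consider the nonsmooth candidate Lyapunov function
\[
V(z) \coloneqq \tfrac{1}{2}e^\top e + \tfrac{1}{2}\tilde{\theta}^\top\Gamma^{-1}\tilde{\theta},
\]
which admits the sandwich bounds $\underline{v}(\|z\|)\le V(z)\le\overline{v}(\|z\|)$ with $\underline{v}(r)=\tfrac12\min\{1,\lambda_{\min}(\Gamma^{-1})\}r^2$ and $\overline{v}(r)=\tfrac12\max\{1,\lambda_{\max}(\Gamma^{-1})\}r^2$. These are the class $\mathcal{K}_\infty$ functions referred to in \eqref{eq:Vbounds}. By Lemma~\ref{lem:init_bounded} the solution is forward complete, $e(t)\in\overline{B}(0,r_e)$, and $x(t)\in\mathbb{X}$. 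Consequently $\|\epsilon(x(t))\|\le\bar\epsilon$, $Y_f$ and $d_f$ are bounded, and Theorems~\ref{thm:Y_bounded} and \ref{thm:E_bounded} give $\underline{y}I_p\preceq\mathcal{Y}(t)\preceq\overline{y}I_p$ for $t\ge t_1$ and $\|\mathcal{E}(t)\|\le\overline{\mathcal{E}}$. Every component of $\zeta$ other than $z$ is therefore treated as a bounded time-varying signal, and the analysis is carried out on $[t_1,\infty)$. The behaviour on $[t_0,t_1]$ is irrelevant to the $\limsup$.

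Since $\dot{\tilde\theta}=-\dot{\hat\theta}$ and $\hat\theta$ solves \eqref{eq:theta_hat_dot}, the set-valued derivative of $V$ is computed in the sense of Shevitz--Paden. $V$ is smooth, so only the inclusion for $\dot{\hat\theta}$ contributes set-valuedness, and the upper bound holds for every $\sigma\in\operatorname{SGN}(\hat\theta)$. First I take the projection to be inactive. Then
\[
\dot V = -e^\top K e + e^\top Y\tilde\theta + e^\top\epsilon - \tilde\theta^\top Y^\top e - k_\theta\tilde\theta^\top(\mathcal{U}-\mathcal{Y}\hat\theta) + k_\theta\lambda\tilde\theta^\top\sigma .
\]
The cross terms $e^\top Y\tilde\theta$ cancel. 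Using $\mathcal{U}-\mathcal{Y}\hat\theta=\mathcal{Y}\tilde\theta+\mathcal{E}$, this becomes
\[
\dot V\le -k\|e\|^2+\|e\|\bar\epsilon-k_\theta\underline{y}\|\tilde\theta\|^2+k_\theta\|\tilde\theta\|\bigl(\overline{\mathcal{E}}+\lambda\sqrt p\bigr),
\]
where $\|\sigma\|\le\sqrt p$. When the projection is active, the standard property $\tilde\theta^\top\Gamma^{-1}(\operatorname{Proj}(\cdot)-\psi)\le 0$ from \cite[Lemma E.1]{SCC.Krstic.Kanellakopoulos.ea1995} shows that projection only adds a nonpositive term. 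Because $\theta\in\Theta\subset\Theta_\varepsilon$, the same bound holds in that case too.

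Young's inequality gives $\|e\|\bar\epsilon\le\frac{k}{2}\|e\|^2+\frac{\bar\epsilon^2}{2k}$ and $k_\theta\|\tilde\theta\|(\overline{\mathcal{E}}+\lambda\sqrt p)\le\frac{k_\theta\underline{y}}{2}\|\tilde\theta\|^2+\frac{k_\theta(\overline{\mathcal{E}}+\lambda\sqrt p)^2}{2\underline{y}}$. Together these yield $\dot V\le-\kappa\|z\|^2+\iota$ for almost all $t\ge t_1$, with $\kappa$ and $\iota$ exactly as in the statement. Hence $\dot V<0$ whenever $\|z\|>\sqrt{\iota/\kappa}$.

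The conclusion then follows from the standard ultimate boundedness argument, for instance \cite[Theorem 4.18]{SCC.Khalil2002}, adapted to absolutely continuous $V(z(t))$ by a contradiction argument. If $\|z(t)\|\ge\mu>\sqrt{\iota/\kappa}$ persisted, then $V$ would decrease at a uniform rate, which is impossible. Once $V(z(t))\le\overline{v}(\mu)$, the sublevel set is invariant. Letting $\mu\downarrow\sqrt{\iota/\kappa}$ gives \eqref{eq:UUBound}.

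I expect the main obstacle to be the rigorous handling of the nonsmooth term. Specifically, one must verify that the generalized time derivative bound holds uniformly over all selections $\sigma$. One must also check that combining the smooth projection operator with the SGN term does not break the inequality $\tilde\theta^\top\Gamma^{-1}(\operatorname{Proj}-\psi)\le0$ inside the boundary layer. My first attempt would bound the two terms separately, using the fact that the projection acts only on $\psi$ and not on the sign term, as in the form of \eqref{eq:theta_hat_dot}. A secondary subtlety is that $\overline{\mathcal{E}}$ must be finite uniformly in time. This is supplied by Theorem~\ref{thm:E_bounded} with $\tau=\infty$, which is justified because Lemma~\ref{lem:init_bounded} keeps $x$ in $\mathbb{X}$.
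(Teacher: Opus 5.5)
Your proposal is correct and follows the paper's proof essentially step for step: the same $V$, the same use of \cite[Lemma E.1]{SCC.Krstic.Kanellakopoulos.ea1995} to discard the projection, and the same Young splits giving $\dot V\le-\kappa\|z\|^2+\iota$ for a.e.\ $t\ge t_1$; the only difference is the final step, where you close with a sublevel-set/ultimate-boundedness argument while the paper rewrites the bound as $\dot V\le-\frac{2\kappa}{\max\{1,\lambda_{\max}(\Gamma^{-1})\}}V+\iota$ and applies the comparison lemma, and both give the identical bound $\underline{v}^{-1}(\overline{v}(\sqrt{\iota/\kappa}))$. One small slip: with $\tilde\theta=\theta-\hat\theta$ the projection property reads $\tilde\theta^\top\Gamma^{-1}(\operatorname{Proj}-\psi)\ge 0$ (equivalently $-\tilde\theta^\top\Gamma^{-1}\operatorname{Proj}\le-\tilde\theta^\top\Gamma^{-1}\psi$), so the inequality you wrote has the sign reversed, although your conclusion that projection contributes only a nonpositive term to $\dot V$ is right.
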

\begin{proof}
    Theorem \ref{thm:Y_bounded} implies that there exists $\underline{y}>0$ such that $\mathcal{Y}(t) \succeq \underline{y} I_p$ for all $t \ge t_1$. Consider the candidate Lyapunov function $V:\mathbb{R}^{n+p}\to\mathbb{R}$ defined by
    \begin{equation}\label{eq:V}
    V(z)\coloneqq \frac{1}{2}\|e\|^2+\frac{1}{2}\tilde{\theta}^{\top}\Gamma^{-1}\tilde{\theta},
    \end{equation}
    which satisfies the bounds
    \begin{equation}\label{eq:Vbounds}
    \underline{v}(\|z\|)\le V(z)\le \overline{v}(\|z\|), \qquad \forall z\in\mathbb{R}^{n+p},
    \end{equation}
    where $\underline{v}(r)\coloneqq \frac{1}{2}\min\!\left\{1,\lambda_{\min}(\Gamma^{-1})\right\}r^2$ and $\overline{v}(r)\coloneqq \frac{1}{2}\max\!\left\{1,\lambda_{\max}(\Gamma^{-1})\right\}r^2$ are class $\mathcal K_{\infty}$ functions. By the chain rule for differential inclusions \cite[Theorem~2.2]{SCC.Shevitz.Paden1994}, the function $t \mapsto V(z(t))$ is absolutely continuous, ensuring its time derivative exists almost everywhere. Along the system trajectories for $t \ge t_1$, using the property of projection operators in \cite[Lemma E.1. IV]{SCC.Krstic.Kanellakopoulos.ea1995}, the derivative satisfies
    \begin{equation}\label{eq:Vdot2}
    \dot V(t,z(t)) \le -k\|e(t)\|^2 -k_{\theta}\,\tilde\theta^\top(t)\mathcal{Y}(t)\tilde\theta(t) +\overline{\epsilon}\|e(t)\| -k_{\theta}\,\tilde\theta^\top(t)\bigl(\mathcal{E}(t)-\lambda\sigma(t)\bigr),
    \end{equation}
    for some measurable selection $\sigma(t) \in \operatorname{SGN}(\hat{\theta}(t))$. Applying the bounds $\mathcal{Y}(t) \succeq \underline{y} I_p$ and $\|\mathcal{E}(t)\| \le \overline{\mathcal{E}}$ valid for all $t \ge t_1$, and using the vector property $\|\sigma(t)\| \le \sqrt{p}$, Young's inequality yields $\dot V(t,z(t)) \le -\kappa\|z(t)\|^2 + \iota$, i.e., from \eqref{eq:Vbounds},
    \begin{equation}\label{eq:Vdot_main}
        \dot V(t,z(t)) \le -\frac{2\kappa}{ \max\{1,\lambda_{\max}(\Gamma^{-1})\} } V(z(t)) + \iota, \quad \text{a.e. } t \ge t_1,
    \end{equation}
    where $\kappa$ and $\iota$ are positive constants defined below \eqref{eq:UUBound}. Using the comparison lemma,
    \begin{multline}
    V(z(t)) \leq V(z(t_1)) \exp\left( -\frac{2\kappa}{ \max\{1,\lambda_{\max}(\Gamma^{-1})\} } (t - t_1) \right)\\ + \overline{v}\!\Bigl(\sqrt{\iota/\kappa}\Bigr) \left[ 1 - \exp\left( -\frac{2\kappa}{ \max\{1,\lambda_{\max}(\Gamma^{-1})\} } (t - t_1) \right) \right].
    \end{multline}
    Taking the limit supremum as $t \to \infty$ yields
    \begin{equation}
    \limsup_{t \to \infty} V(z(t)) \le \overline{v}\!\Bigl(\sqrt{\iota/\kappa}\Bigr).
    \end{equation}
    Using the lower bound from \eqref{eq:Vbounds} and the strictly increasing inverse $\underline{v}^{-1}$, we obtain
    \begin{equation}
    \limsup_{t \to \infty} \|z(t)\| \le \underline{v}^{-1}\Bigl(\limsup_{t \to \infty} V(z(t))\Bigr) \le \underline{v}^{-1}\!\Bigl(\overline{v}\bigl(\sqrt{\iota/\kappa}\bigr)\Bigr),
    \end{equation}
    which establishes the ultimate bound in \eqref{eq:UUBound}.
\end{proof}

\section{Simulation Study}
\label{sec:simulation}

This section evaluates the SP-VDF update law in \eqref{eq:theta_hat_dot} in terms of closed-loop tracking, parameter estimation, and online support recovery (identifying the nonzero entries of the unknown parameter vector) in the presence of function-approximation errors and parameter changes. Since the stability analysis assumes a constant unknown parameter vector, parameter switching is used only to evaluate alertness empirically and lies outside the hypotheses of Theorem~\ref{thm:main}.

We consider a two-state control-affine system on a compact domain with dynamics given by \eqref{eq:dynamics}, where $x=[x_1,x_2]^\top\in\mathbb{R}^2$. The unknown drift dynamics are modeled as a modified Van der Pol oscillator,
\begin{equation}\label{eq:sim_dyn}
f(x,t)=
\begin{bmatrix}
x_2\\
-x_1+\mu(t)(1-x_1^2)x_2
\end{bmatrix},
\end{equation}
with $g(x)=I_2$. The damping coefficient is switched at $t=t_f/2=250\,\mathrm{s}$ according to $\mu(t)=1.0+0.5\,\mathbb{I}_{\{t\geq250\,\mathrm{s}\}}$, where $\mathbb{I}_{\{\cdot\}}$ denotes the indicator function and $t_f=500\,\mathrm{s}$ is the simulation horizon. Thus, $\mu(t)$ changes from $1.0$ to $1.5$ at $t=250\,\mathrm{s}$.

To evaluate the SP-VDF formulation in an overparameterized setting, the unknown vector field $f(x,t)$ is represented using a high-dimensional candidate dictionary as in \eqref{eq:f_approx}. The library matrix is constructed as $Y(x) = \text{diag}(\Phi(x)^\top, \Phi(x)^\top) \in \mathbb{R}^{2\times28}$, where the basis vector $\Phi(x)\in\mathbb{R}^{14}$ contains monomials up to degree three, trigonometric terms, and state cross-products
\begin{align}
\Phi(x) = \big[ 1, x_1, x_2, x_1^2, x_1x_2, x_2^2, x_1^3, x_1^2x_2, x_1x_2^2, x_2^3, \sin(x_1), \sin(x_2), x_1\sin(x_2), x_2\sin(x_1) \big]^\top.
\end{align}
To assess the effect of function approximation errors, $\epsilon(x)=0.02[\sin(2x_1),\cos(2x_2)]^\top$ is added to the dynamics as defined in \eqref{eq:f_approx}. In the resulting 28-dimensional parameter vector $\theta(t)\in\mathbb{R}^{28}$, only four entries are nonzero. Specifically, $\theta_{3} = 1$, $\theta_{16} = -1$, $\theta_{17}(t) = \mu(t)$, and $\theta_{22}(t) = -\mu(t)$, while the remaining $24$ elements of $\theta(t)$ are identically zero.

The system is initialized with the state $x(0)=[-2,1]^\top$ and without any prior knowledge of the unknown parameters, with the parameter estimates initialized as $\hat{\theta}(0)=0_{28 \times 1}$. The MRE signals are initialized as $\mathcal{Y}(0)=0_{28\times28}$ and $\mathcal{U}(0)=0_{28 \times 1}$.
% To examine parameter estimation under FE followed by subspace PE, the desired trajectory is chosen as
% \begin{multline}
% x_d(t) =
% \begin{bmatrix}
% 2.5\sin(0.7t) \\
% 2.0\cos(1.1t)
% \end{bmatrix}
% \\+ e^{-0.05t}
% \begin{bmatrix}
% 1.5\cos(2.3t) + 0.5\sin(4.1t)\\
% 1.2\sin(2.9t) + 0.4\cos(4.7t)
% \end{bmatrix},
% \end{multline}
% to satisfy Assumption~\ref{ass:FE} and \ref{ass:PE}. The exponentially decaying terms provide additional excitation during the initial portion of the simulation and gradually vanish, leaving the lower-frequency components before the parameter switch. 
Motivated by Assumptions~\ref{ass:FE} and \ref{ass:PE}, the desired trajectory is selected as
\begingroup\medmuskip=0mu\begin{equation}\label{eq:xd}
x_d(t)=
\begin{bmatrix}
2.5\sin(0.7t)\\
2.0\cos(1.1t)
\end{bmatrix}
+\eta(t)
\begin{bmatrix}
1.5\cos(2.3t)+0.5\sin(4.1t)\\
1.2\sin(2.9t)+0.4\cos(4.7t)
\end{bmatrix},
\end{equation}
where
\begin{equation}\label{eq:excitation_envelope}
\eta(t)=
\begin{cases}
\left(1-\dfrac{t}{t_e}\right)^3, & 0\leq t\leq t_e,\\[1mm]
0, & t>t_e,
\end{cases}
\end{equation}\endgroup
with the excitation cutoff time $t_e=100\,\mathrm{s}$. The filtered regressor $Y_f$ is generated using an FIR window of length $\Delta t=0.25\,\mathrm{s}$. Hence, the effect of the excitation cutoff persists in $Y_f(t)$ for at most $\Delta t$, so that for $t\geq t_1\coloneqq t_e+\Delta t=100.25\,\mathrm{s}$, $Y_f(t)$ contains no contribution from the additional excitation terms. The FE condition is verified numerically over the specified finite interval. Since subspace PE requires the integral condition to hold for all $t$, it is not verified directly here. Figure~\ref{fig:vdp_regressor_bound_min_all_lambda} shows the uniform positivity of the minimum eigenvalue of the MR throughout the simulation. The feedback control gain in \eqref{eq:control} is selected as $K=10I_{2}$, the scalar MRE gain in \eqref{eq:theta_hat_dot} is selected as $k_{\theta}=1$, and the adaptation gain matrix is $\Gamma=\operatorname{diag}(\gamma_1,\ldots,\gamma_{28})$, with $\gamma_{17}=\gamma_{22}=10$ and $\gamma_i=1$ for all other indices. The maximum forgetting rate for \eqref{eq:beta_composite} is selected as $\beta_{\max} = 5$ and the alertness thresholds are chosen as $\underline{y}_1=10^{-4}$ and $\underline{y}_2=0.20$, while the upper bounds are $\overline{y}_1=1$ and $\overline{y}_2=3$. Simulations are run using MATLAB's \texttt{dde23} solver with relative and absolute tolerances of $10^{-3}$ and $10^{-4}$, respectively.

\subsection{Comparative Baseline Formulations}

To assess the performance of directional forgetting, the proposed SP-VDF method is compared with two sparse parameter estimation baselines under identical initial conditions and reference trajectories:
\begin{enumerate}
    \item \textit{Sparsity-Promoting Integral Concurrent Learning (SP-ICL):} Based on \cite{SCC.Satharasi.Ogri.ea2026a}, this method uses a history stack of previously recorded data. The stack stores up to $N=100$ samples and is updated using a minimum recording interval of $t_{\min}=0.05\,\mathrm{s}$ and a minimum singular value threshold of $\lambda^*=10^{-3}$.
    
    \item \textit{Sparsity-Promoting Variable-Rate Uniform Forgetting (SP-VUF):} Based on the variable-rate uniform forgetting (VUF) estimator in \cite{SCC.Pan.Aranovskiy.ea2019}, this formulation includes the same sparsity-promoting $\ell_1$ regularization term used in SP-VDF. Unlike directional forgetting, the VUF approach uses a scalar forgetting factor that uniformly discounts the MR across all parameter directions. The forgetting parameters are chosen to match those of SP-VDF: $\beta_{\max}=5$, $\underline{y}_1=10^{-4}$, and $\underline{y}_2=0.20$.
\end{enumerate}

To examine the effect of the sparsity penalty, simulations are conducted for $\lambda \in \{0,10^{-3},5\times10^{-3},10^{-2},5\times10^{-2}\}$ for all three approaches. The case $\lambda=0$ corresponds to removing the sparsity penalty. Performance is evaluated using the root-mean-square (RMS) tracking error, terminal parameter estimation error $\|\tilde{\theta}(t_f)\|$, and support recovery $F_1$ score \cite[Chapter~8]{SCC.Manning.Raghavan2009}. A parameter is classified as active when $|\hat{\theta}_i| > \vartheta$, with $\vartheta \geq 0$. Support recovery is evaluated for $\vartheta \in \{1.0\times10^{-3},1.0\times10^{-2},5.0\times10^{-2},1.0\times10^{-1}\}$, as shown in Table~\ref{tab:ablation_tolerances}. The threshold distinguishes nonzero parameters from small estimates caused by numerical integration errors.

\begin{figure}
\centering
\resizebox{0.75\columnwidth}{!}{$
\input{figures/parameter_errors}
$}
\caption{Parameter estimation error norm $\|\tilde{\theta}(t)\|$ of the SP-VDF update law for different values of the regularization parameter $\lambda$.}
\label{fig:vdp_parameter_error_norm_all_lambda}
\end{figure}

\begin{figure}
\centering
\resizebox{0.75\columnwidth}{!}{$
\input{figures/tracking_errors}
$}
\caption{Tracking error norm $\|e(t)\|$ of the SP-VDF update law for different values of the regularization parameter $\lambda$.}
\label{fig:vdp_tracking_error_norm_all_lambda}
\end{figure}

\begin{figure}
\centering
\resizebox{0.75\columnwidth}{!}{$
\input{figures/regressor_matrix_bounds}
$}
\caption{Minimum eigenvalue of $\mathcal{Y}(t)$ for the SP-VDF update law for different values of the regularization parameter $\lambda$.}
\label{fig:vdp_regressor_bound_min_all_lambda}
\end{figure}

\begin{figure}
\centering
\resizebox{0.75\columnwidth}{!}{$
\input{figures/regressor_matrix_bounds2}
$}
\caption{Maximum eigenvalue of $\mathcal{Y}(t)$ for the SP-VDF update law for different values of the regularization parameter $\lambda$.}
\label{fig:vdp_regressor_bound_max_all_lambda}
\end{figure}

\begin{figure*}[t]
\centering
\resizebox{0.75\columnwidth}{!}{$
\input{figures/comparison_parameter_errors}
$}
\caption{Parameter estimation error norm $\|\tilde{\theta}(t)\|$ for SP-VDF, SP-VUF, and SP-ICL. (a) For $\lambda=10^{-2}$, SP-VDF recovers the parameters faster after the change at $t=250\,\mathrm{s}$ and reaches a smaller error sooner than the baseline methods. (b) For $\lambda=5\times10^{-2}$, SP-VDF remains stable and reduces the error to $0.1831$, while SP-VUF exhibits estimator windup and SP-ICL has a larger terminal error and loses numerical rank.}
\label{fig:error_comparison}
\end{figure*}

\begin{table}[ht]
\centering
\caption{Tracking and parameter estimation performance for different sparsity weights $\lambda$ ($t_f=500\,\mathrm{s}$). Best values are shown in bold.}
\label{tab:performance_lambda_sweep_all}
\vspace{0.2cm}
\renewcommand{\arraystretch}{1.2}
\begin{tabular}{@{} l ccc ccc @{}}
\toprule
& \multicolumn{3}{c}{\textbf{RMS} $\|e\|$ ($\times 10^{-2}$)} & \multicolumn{3}{c}{\textbf{Final} $\|\tilde{\theta}(t_f)\|$} \\
\cmidrule(lr){2-4} \cmidrule(lr){5-7}
$\lambda$ & SP-VDF & SP-VUF & SP-ICL & SP-VDF & SP-VUF & SP-ICL \\
\midrule
$0$                & 3.7927 & 3.7830 & \textbf{3.6635} & \textbf{0.6933} & 0.9830 & 0.9425 \\
$10^{-3}$          & 3.7888 & 3.8077 & \textbf{3.6604} & \textbf{0.6072} & 0.9157 & 0.8443 \\
$5\times10^{-3}$   & 3.7756 & 4.3324 & \textbf{3.6703} & \textbf{0.4040} & 1.1901 & 0.5164 \\
$10^{-2}$          & 3.7641 & 5.0981 & \textbf{3.7507} & \textbf{0.1911} & 2.0196 & 0.3861 \\
$5\times10^{-2}$   & \textbf{3.8394} & 6.6402 & 5.7252 & \textbf{0.1831} & 2.1735 & 2.0435 \\
\bottomrule
\end{tabular}
\end{table}

\begin{table}[ht]
\centering
\caption{$F_1$ scores at the parameter switch ($t=250\,\mathrm{s}$) and final time ($t_f=500\,\mathrm{s}$) for $\vartheta=10^{-3}$. Best values are shown in bold.}
\label{tab:f1_combined}
\vspace{0.2cm}
\renewcommand{\arraystretch}{1.2}
\begin{tabular}{@{} l cc cc cc @{}}
\toprule
& \multicolumn{2}{c}{\textbf{SP-VDF}} & \multicolumn{2}{c}{\textbf{SP-VUF}} & \multicolumn{2}{c}{\textbf{SP-ICL}} \\
\cmidrule(lr){2-3} \cmidrule(lr){4-5} \cmidrule(lr){6-7}
$\lambda$ & $t=250\,\mathrm{s}$ & $t=500\,\mathrm{s}$ & $t=250\,\mathrm{s}$ & $t=500\,\mathrm{s}$ & $t=250\,\mathrm{s}$ & $t=500\,\mathrm{s}$ \\
\midrule
$0$                & \textbf{0.3333} & \textbf{0.3478} & 0.2581 & 0.2759 & 0.2963 & 0.2857 \\
$10^{-3}$          & \textbf{0.4000} & 0.3478          & 0.3810 & 0.3478 & \textbf{0.4000} & \textbf{0.3636} \\
$5\times10^{-3}$   & \textbf{0.5000} & \textbf{0.5333} & 0.3810 & 0.2963 & 0.4211 & 0.3810 \\
$10^{-2}$          & \textbf{0.6667} & \textbf{0.5714} & 0.3158 & 0.2500 & 0.4211 & 0.3636 \\
$5\times10^{-2}$   & \textbf{0.8889} & \textbf{0.7273} & 0.2222 & 0.1429 & 0.4000 & 0.3000 \\
\bottomrule
\end{tabular}%
\end{table}

\begin{table}[ht]
\centering
\caption{Support recovery for different thresholds $\vartheta$ with $\lambda=0.05$ at $t_f=500\,\mathrm{s}$. Est., TP, and FP denote the estimated active count, true positives, and false positives, respectively. Best values are shown in bold.}
\label{tab:ablation_tolerances}
\vspace{0.2cm}
\renewcommand{\arraystretch}{1.2}
\begin{tabular}{@{} l cccc cccc cccc @{}}
\toprule
& \multicolumn{4}{c}{\textbf{SP-VDF}} & \multicolumn{4}{c}{\textbf{SP-VUF}} & \multicolumn{4}{c}{\textbf{SP-ICL}} \\
\cmidrule(lr){2-5} \cmidrule(lr){6-9} \cmidrule(lr){10-13}
$\vartheta$ & Est. & TP & FP & $F_1$ & Est. & TP & FP & $F_1$ & Est. & TP & FP & $F_1$ \\
\midrule
$1.0\times10^{-3}$ & 7 & 4 & 3 & \textbf{0.7273} & 10 & 1 & 9 & 0.1429 & 16 & 3 & 13 & 0.3000 \\
$1.0\times10^{-2}$ & 4 & 4 & 0 & \textbf{1.0000} & 8  & 1 & 7 & 0.1667 & 15 & 3 & 12 & 0.3158 \\
$5.0\times10^{-2}$ & 4 & 4 & 0 & \textbf{1.0000} & 3  & 1 & 2 & 0.2857 & 8  & 3 & 5  & 0.5000 \\
$1.0\times10^{-1}$ & 4 & 4 & 0 & \textbf{1.0000} & 3  & 1 & 2 & 0.2857 & 5  & 3 & 2  & 0.6667 \\
\bottomrule
\end{tabular}%
\end{table}

\begin{table}[ht]
\centering
\caption{MR conditioning and effective rank at the parameter switch ($t=250\,\mathrm{s}$) and final time ($t_f=500\,\mathrm{s}$) for $\lambda=5\times10^{-2}$. Effective rank is defined as the number of eigenvalues greater than a numerical tolerance of $10^{-6}$. Best values are shown in bold.}
\label{tab:matrix_conditioning}
\vspace{0.2cm}
\renewcommand{\arraystretch}{1.2}
\begin{tabular}{@{} l cc cc cc @{}}
\toprule
& \multicolumn{2}{c}{\textbf{SP-VDF}} & \multicolumn{2}{c}{\textbf{SP-VUF}} & \multicolumn{2}{c}{\textbf{SP-ICL}} \\
\cmidrule(lr){2-3} \cmidrule(lr){4-5} \cmidrule(lr){6-7}
Time & Eff. Rank & $\lambda_{\min}(\mathcal{Y})$ & Eff. Rank & $\lambda_{\min}(\mathcal{Y})$ & Eff. Rank & $\lambda_{\min}(\mathcal{Y})$ \\
\midrule
$t=250\,\mathrm{s}$ & \textbf{28/28} & $\mathbf{1.18\times10^{-3}}$ & 10/28 & $6.51\times10^{-12}$ & 26/28 & $6.85\times10^{-8}$ \\
$t=500\,\mathrm{s}$ & \textbf{28/28} & $\mathbf{1.76\times10^{-3}}$ & 16/28 & $8.81\times10^{-12}$ & 24/28 & $4.88\times10^{-8}$ \\
\bottomrule
\end{tabular}%
\end{table}

\subsection{Discussion}

Figures~\ref{fig:vdp_parameter_error_norm_all_lambda}--\ref{fig:error_comparison} and Tables~\ref{tab:performance_lambda_sweep_all}--\ref{tab:matrix_conditioning} compare the effects of sparsity regularization and forgetting on closed-loop tracking, parameter estimation, MR conditioning, and support recovery.

\subsubsection{Closed-Loop Tracking Performance}

Figure~\ref{fig:vdp_tracking_error_norm_all_lambda} and Table~\ref{tab:performance_lambda_sweep_all} show that all three methods track the reference trajectory throughout the simulation. For $\lambda\leq10^{-2}$, SP-ICL achieves lower RMS tracking errors than SP-VDF. The transient response following the parameter switch provides new data that allows SP-ICL to update the changed parameter and reduce the tracking error, despite the rank deficiency of its MR shown in Table~\ref{tab:matrix_conditioning}.

At $\lambda=5\times10^{-2}$, the RMS tracking errors increase to $5.7252\times10^{-2}$ for SP-ICL and $6.6402\times10^{-2}$ for SP-VUF, compared with $3.8394\times10^{-2}$ for SP-VDF. The increase in error for SP-ICL and SP-VUF is consistent with the loss of excitation in several parameter directions after the finite excitation phase ends at $t=100\,\mathrm{s}$. The $\ell_1$ penalty then has a larger effect on the weakly excited directions. SP-VDF maintains a full-rank MR, as shown in Table~\ref{tab:matrix_conditioning}, and has the lowest RMS tracking error at this value of $\lambda$.

\subsubsection{Parameter Convergence and Support Recovery}

Figure~\ref{fig:vdp_parameter_error_norm_all_lambda} shows the parameter estimation error for each value of $\lambda$. For $\lambda\leq10^{-2}$, SP-VDF shows an increase in estimation error during the reduced-excitation interval. Directional forgetting preserves the eigenvalues associated with unexcited directions and prevents the MR from losing rank. Without new information in these directions, tracking and numerical errors can accumulate in the corresponding parameter estimates.

For $\lambda=5\times10^{-2}$, SP-VDF achieves a terminal estimation error of $\|\tilde{\theta}(t_f)\|=0.1831$, compared with $2.0435$ for SP-ICL and $2.1735$ for SP-VUF. Figure~\ref{fig:error_comparison}(b) shows the degradation of the baseline methods. Figures~\ref{fig:vdp_regressor_bound_min_all_lambda} and \ref{fig:vdp_regressor_bound_max_all_lambda}, together with Table~\ref{tab:matrix_conditioning}, show the corresponding changes in MR conditioning. Under directional forgetting, $\lambda_{\min}(\mathcal{Y})$ remains positive, and the effective rank is $28/28$ at both $t=250\,\mathrm{s}$ and $t=500\,\mathrm{s}$.

At $t=250\,\mathrm{s}$, SP-VUF has an effective rank of $10/28$ with $\lambda_{\min}(\mathcal{Y})=6.51\times10^{-12}$. The uniform forgetting update becomes sensitive to the resulting ill-conditioning, leading to estimator windup and increased solver execution times. SP-ICL retains an effective rank of $26/28$ with $\lambda_{\min}(\mathcal{Y})\approx6.85\times10^{-8}$. The loss of effective rank in both baseline methods occurs after the rich frequency components in the reference trajectory terminate at $t=100\,\mathrm{s}$.

At the parameter switch, SP-VDF recovers the correct support while adapting the changed coefficient values, achieving an $F_1$ score of $0.8889$ for $\lambda=5\times10^{-2}$. SP-ICL and SP-VUF achieve $F_1$ scores of $0.4000$ and $0.2222$, respectively.

\subsubsection{Sensitivity to Identification Thresholds}

Table~\ref{tab:f1_combined} reports the $F_1$ scores over the $\lambda$ sweep. The scores are computed using $\vartheta=10^{-3}$. For $\lambda=5\times10^{-2}$, SP-VDF has an $F_1$ score of $0.7273$ at $t_f=500\,\mathrm{s}$.

Table~\ref{tab:ablation_tolerances} reports support recovery for $\lambda=5\times10^{-2}$ at different thresholds. At $\vartheta=10^{-2}$, SP-VDF identifies the four nonzero parameters with zero false positives and an $F_1$ score of $1.0000$. The same $F_1$ score is obtained for $\vartheta=5\times10^{-2}$ and $\vartheta=10^{-1}$. For SP-VUF and SP-ICL, the estimated support differs from the true support over the tested threshold range. At $\vartheta=10^{-1}$, SP-VUF identifies three true active coefficients with two false positives, while SP-ICL identifies three true active coefficients with two false positives.

\section{Conclusion}
\label{sec:conclusion}

An online sparse identification and adaptive tracking framework for nonlinear control-affine systems with overparameterized candidate dictionaries is developed using a sparse MRE. The proposed SP-VDF update law combines recursive least-squares estimation with an $\ell_1$ sparsity penalty to identify unknown drift dynamics while maintaining closed-loop tracking.

The simulation results show that SP-VDF achieves lower terminal parameter estimation errors than SP-ICL and SP-VUF for the tested sparsity penalties. When the excitation decreases after the initial interval, directional forgetting maintains a full-rank MR and allows the parameter estimates to adapt after the parameter switch. SP-VDF also recovers the active parameter set, while the baseline methods exhibit estimator windup or numerical rank deficiency. The closed-loop tracking error remains small as the sparsity weight increases.

A limitation of the current formulation is the use of a constant sparsity weight under subspace excitation. When excitation is limited over extended intervals, the $\ell_1$ penalty can shrink parameters associated with weakly excited directions. Larger values of $\lambda$ also increase the computational stiffness of the online parameter update.

Future work will extend the SP-VDF framework to richer function representations, including Kolmogorov--Arnold networks, and investigate adaptive selection of the sparsity weight based on online measures of excitation. Robustness to unmodeled measurement noise and convergence for time-varying nonlinear systems will also be investigated.

\small
\bibliographystyle{IEEETrans.bst}
\bibliography{scc, sccmaster, scctemp, mybib}

\end{document}

%% file: figures/parameter_errors.tex
\begin{tikzpicture}

\begin{axis}[
    width=0.95\linewidth,
    height=0.650\linewidth,
    xlabel={$t\;(\mathrm{s})$},
    ylabel={$\| \tilde{\theta}(t) \|$},
    % ymode=log,
    grid=both,
    % TAC/SIAM typically uses dotted or very faint grids
    major grid style={draw=gray!40, dotted, line width=0.5pt},
    minor grid style={draw=gray!20, dotted, line width=0.5pt},
    tick style={black},
    axis line style={black},
    clip mode=individual,
    label style={font=\small},
    tick label style={font=\small},
    legend style={
    font=\footnotesize,
        draw=none,
        fill=none,
        at={(0.98,0.98)},       % Moves the legend to the top right (inside the plot)
        anchor=north east,      % Anchors the top right corner of the legend box
        legend columns=1,       % Changes back to a standard vertical list
        row sep=2pt,
    },
    % Enforce a single, consistent, thin line width for all plots
    every axis plot/.append style={line width=1.0pt}, 
]

% lambda = 0
\addplot[
    color=blue,
    solid,
]
table [x index=0, y index=1]
{data/lambda_0/data/parameter_estimation_error_norm.dat};
\addlegendentry{$\lambda=0$}

% lambda = 1e-3
\addplot[
    color=red,
    dashed,
]
table [x index=0, y index=1]
{data/lambda_0p001/data/parameter_estimation_error_norm.dat};
\addlegendentry{$\lambda=10^{-3}$}

% lambda = 5e-3
\addplot[
    color=green!60!black, % Darkened slightly so it's visible, but remains a basic green
    dashdotted,
    mark=*,
    mark size=1.2pt,
    mark repeat=14,
    mark options={solid}, % Keeps the marker outline solid even if line is dashed
]
table [x index=0, y index=1]
{data/lambda_0p005/data/parameter_estimation_error_norm.dat};
\addlegendentry{$\lambda=5\times10^{-3}$}

% lambda = 1e-2
\addplot[
    color=magenta,
    densely dotted,
]
table [x index=0, y index=1]
{data/lambda_0p01/data/parameter_estimation_error_norm.dat};
\addlegendentry{$\lambda=10^{-2}$}

% lambda = 5e-2
\addplot[
    color=orange,
    solid,
    mark=square,
    mark size=1.2pt,
    mark repeat=14,
]
table [x index=0, y index=1]
{data/lambda_0p05/data/parameter_estimation_error_norm.dat};
\addlegendentry{$\lambda=5\times10^{-2}$}

\end{axis}
\end{tikzpicture}

%% file: figures/tracking_errors.tex
\begin{tikzpicture}
\begin{axis}[
    width=0.95\linewidth,
    height=0.650\linewidth,
    xlabel={$t\;(\mathrm{s})$},
    ylabel={$\|e(t)\|$},
    % ymode=log,
    yminorticks=true,
    grid=major,
    % Replaced solid gray grid with traditional dotted grid
    major grid style={draw=gray!40, dotted, line width=0.5pt},
    tick style={black},
    axis line style={black},
    clip mode=individual,
    label style={font=\small},
    tick label style={font=\small},
    legend style={
    font=\footnotesize,
        draw=none,
        fill=none,
        at={(0.98,0.98)},       % Moves the legend to the top right (inside the plot)
        anchor=north east,      % Anchors the top right corner of the legend box
        legend columns=1,       % Changes back to a standard vertical list
        row sep=2pt,
    },
    % Enforce a single, consistent, thin line width for all plots
    every axis plot/.append style={line width=1.0pt}, 
]

% 1. lambda = 0
\addplot[
    color=blue, 
    solid
]
table [x index=0, y index=1] {data/lambda_0/data/tracking_error_norm.dat};
\addlegendentry{$\lambda=0$}

% 2. lambda = 1e-3
\addplot[
    color=red, 
    dashed
]
table [x index=0, y index=1] {data/lambda_0p001/data/tracking_error_norm.dat};
\addlegendentry{$\lambda=10^{-3}$}

% 3. lambda = 5e-3
\addplot[
    color=green!60!black, 
    dashdotted,
    mark=*,
    mark size=1.2pt,
    mark repeat=14,
    mark options={solid}
]
table [x index=0, y index=1] {data/lambda_0p005/data/tracking_error_norm.dat};
\addlegendentry{$\lambda=5\times 10^{-3}$}

% 4. lambda = 1e-2
\addplot[
    color=magenta, 
    densely dotted
]
table [x index=0, y index=1] {data/lambda_0p01/data/tracking_error_norm.dat};
\addlegendentry{$\lambda=10^{-2}$}

% 5. lambda = 5e-2
\addplot[
    color=orange, 
    solid,
    mark=square,
    mark size=1.2pt,
    mark repeat=14
]
table [x index=0, y index=1] {data/lambda_0p05/data/tracking_error_norm.dat};
\addlegendentry{$\lambda=5\times 10^{-2}$}

% % 6. lambda = 1e-1
% \addplot[
%     color=cyan, 
%     densely dashed, 
%     mark=triangle, 
%     mark size=1.4pt, 
%     mark repeat=12,
%     mark options={solid}
% ]
% table [x index=0, y index=1] {data/lambda_0p1/data/tracking_error_norm.dat};
% \addlegendentry{$\lambda=10^{-1}$}

\end{axis}
\end{tikzpicture}

%% file: figures/regressor_matrix_bounds.tex
\begin{tikzpicture}

\begin{axis}[
    width=0.95\linewidth,
    height=0.650\linewidth,
    xlabel={$t\;(\mathrm{s})$},
    ylabel={$\lambda_{\min}\left(\mathcal Y (t)\right)$},
    % ymode=log,
    grid=both,
    % TAC/SIAM traditional dotted grid
    major grid style={draw=gray!40, dotted, line width=0.5pt},
    minor grid style={draw=gray!20, dotted, line width=0.5pt},
    tick style={black},
    axis line style={black},
    clip mode=individual,
    label style={font=\small},
    tick label style={font=\small},
  legend style={
    font=\footnotesize,
    draw=none,
    fill=none,
    at={(0.98,0.98)},       % Moves the legend to the top right (inside the plot)
    anchor=north east,      % Anchors the top right corner of the legend box
    legend columns=1,       % Changes back to a standard vertical list
    row sep=2pt,
},
    % Enforce a single, consistent, thin line width for all plots
    every axis plot/.append style={line width=1.0pt}, 
]

% lambda = 0
\addplot[
    color=blue,
    solid,
]
table [x index=0, y index=1]
{data/lambda_0/data/information_matrix_eigenvalues.dat};
\addlegendentry{$\lambda=0$}

% lambda = 1e-3
\addplot[
    color=red,
    dashed,
]
table [x index=0, y index=1]
{data/lambda_0p001/data/information_matrix_eigenvalues.dat};
\addlegendentry{$\lambda=10^{-3}$}

% lambda = 5e-3
\addplot[
    color=green!60!black,
    dashdotted,
    mark=*,
    mark size=1.2pt,
    mark repeat=14,
    mark options={solid}, % Keeps the marker outline solid
]
table [x index=0, y index=1]
{data/lambda_0p005/data/information_matrix_eigenvalues.dat};
\addlegendentry{$\lambda=5\times10^{-3}$}

% lambda = 1e-2
\addplot[
    color=magenta,
    densely dotted,
]
table [x index=0, y index=1]
{data/lambda_0p01/data/information_matrix_eigenvalues.dat};
\addlegendentry{$\lambda=10^{-2}$}

% lambda = 5e-2
\addplot[
    color=orange,
    solid,
    mark=square,
    mark size=1.2pt,
    mark repeat=14,
    mark options={solid}, % Keeps the marker outline solid
]
table [x index=0, y index=1]
{data/lambda_0p05/data/information_matrix_eigenvalues.dat};
\addlegendentry{$\lambda=5\times10^{-2}$}

\end{axis}
\end{tikzpicture}

%% file: figures/regressor_matrix_bounds2.tex
\begin{tikzpicture}

\begin{axis}[
    width=0.95\linewidth,
    height=0.650\linewidth,
    xlabel={$t\;(\mathrm{s})$},
    ylabel={$\lambda_{\max}\left(\mathcal Y (t)\right)$},
    % xmin=0, xmax=100,
    %ymode=log,
    grid=both,
    % TAC/SIAM traditional dotted grid
    major grid style={draw=gray!40, dotted, line width=0.5pt},
    minor grid style={draw=gray!20, dotted, line width=0.5pt},
    tick style={black},
    axis line style={black},
    clip mode=individual,
    label style={font=\small},
    tick label style={font=\small},
    legend style={
    font=\footnotesize,
    draw=none,
    fill=none,
    at={(0.98,0.98)},       % Moves the legend to the top right (inside the plot)
    anchor=north east,      % Anchors the top right corner of the legend box
    legend columns=1,       % Changes back to a standard vertical list
    row sep=2pt,
},
    % Enforce a single, consistent, thin line width for all plots
    every axis plot/.append style={line width=1.0pt}, 
]

% lambda = 0
\addplot[
    color=blue,
    solid,
]
table [x index=0, y index=2]
{data/lambda_0/data/information_matrix_eigenvalues.dat};
\addlegendentry{$\lambda=0$}

% lambda = 1e-3
\addplot[
    color=red,
    dashed,
]
table [x index=0, y index=2]
{data/lambda_0p001/data/information_matrix_eigenvalues.dat};
\addlegendentry{$\lambda=10^{-3}$}

% lambda = 5e-3
\addplot[
    color=green!60!black,
    dashdotted,
    mark=*,
    mark size=1.2pt,
    mark repeat=14,
    mark options={solid},
]
table [x index=0, y index=2]
{data/lambda_0p005/data/information_matrix_eigenvalues.dat};
\addlegendentry{$\lambda=5\times10^{-3}$}

% lambda = 1e-2
\addplot[
    color=magenta,
    densely dotted,
]
table [x index=0, y index=2]
{data/lambda_0p01/data/information_matrix_eigenvalues.dat};
\addlegendentry{$\lambda=10^{-2}$}

% lambda = 5e-2
\addplot[
    color=orange,
    solid,
    mark=square,
    mark size=1.2pt,
    mark repeat=14,
    mark options={solid},
]
table [x index=0, y index=2]
{data/lambda_0p05/data/information_matrix_eigenvalues.dat};
\addlegendentry{$\lambda=5\times10^{-2}$}

\end{axis}
\end{tikzpicture}

%% file: figures/comparison_parameter_errors.tex
% --- Subplot A: Moderate Penalty ---
\begin{minipage}[t]{0.46\linewidth}
\vspace{0pt} % <-- MAGIC FIX: Forces absolute top alignment
\begin{tikzpicture}
\begin{axis}[
    width=\linewidth,
    height=0.75\linewidth,
    xlabel={$t\;(\mathrm{s})$},
    ylabel={$\|\tilde{\theta}(t)\|$},
    xmin=0, xmax=500,
    title={(a) Moderate Penalty ($\lambda = 10^{-2}$)},
    yminorticks=true,
    grid=major,
    major grid style={draw=gray!40, dotted, line width=0.5pt},
    tick style={black},
    axis line style={black},
    clip mode=individual,
    label style={font=\small},
    tick label style={font=\small},
    title style={font=\small, yshift=-1ex},
    every axis plot/.append style={line width=1.0pt},
    % Shared Legend - Centered below both plots and lowered to clear the xlabel
    legend style={
    font=\footnotesize,
    draw=none,
    fill=none,
    at={(0.98,0.98)},       % Moves the legend to the top right (inside the plot)
    anchor=north east,      % Anchors the top right corner of the legend box
    legend columns=1,       % Changes back to a standard vertical list
    row sep=2pt,
},
]

% 1. SP-VDF
\addplot[color=orange, solid]
table [x index=0, y index=1]
{data/lambda_0p01/data/parameter_estimation_error_norm.dat};
\addlegendentry{SP-VDF}

% 2. SP-VUF
\addplot[color=blue, dashed]
table [x index=0, y index=1]
{data_uniform/lambda_0p01/data/parameter_estimation_error_norm.dat};
\addlegendentry{SP-VUF}

% 3. SP-ICL
\addplot[color=green!60!black, densely dotted]
table [x index=0, y index=1]
{data_icl/lambda_0p01/data/parameter_estimation_error_norm.dat};
\addlegendentry{SP-ICL}

\end{axis}
\end{tikzpicture}
\end{minipage}\hfill% <-- Crucial percent sign prevents invisible spacing
% --- Subplot B: High Penalty ---
\begin{minipage}[t]{0.46\linewidth}
\vspace{0pt} % <-- MAGIC FIX: Forces absolute top alignment
\begin{tikzpicture}
\begin{axis}[
    width=\linewidth,
    height=0.75\linewidth,
    xlabel={$t\;(\mathrm{s})$},
    xmin=0, xmax=500,
    title={(b) High Penalty ($\lambda = 5 \times 10^{-2}$)},
    yminorticks=true,
    grid=major,
    major grid style={draw=gray!40, dotted, line width=0.5pt},
    tick style={black},
    axis line style={black},
    clip mode=individual,
    label style={font=\small},
    tick label style={font=\small},
    title style={font=\small, yshift=-1ex},
    every axis plot/.append style={line width=1.0pt},
    legend style={
    font=\footnotesize,
    draw=none,
    fill=none,
    at={(0.98,0.98)},       % Moves the legend to the top right (inside the plot)
    anchor=north east,      % Anchors the top right corner of the legend box
    legend columns=1,       % Changes back to a standard vertical list
    row sep=2pt,
},
]

% 1. SP-VDF
\addplot[color=orange, solid]
table [x index=0, y index=1]
{data/lambda_0p05/data/parameter_estimation_error_norm.dat};
\addlegendentry{SP-VDF}

% 2. SP-VUF
\addplot[color=blue, dashed]
table [x index=0, y index=1]
{data_uniform/lambda_0p05/data/parameter_estimation_error_norm.dat};
\addlegendentry{SP-VUF}

% 3. SP-ICL
\addplot[color=green!60!black, densely dotted]
table [x index=0, y index=1]
{data_icl/lambda_0p05/data/parameter_estimation_error_norm.dat};
\addlegendentry{SP-ICL}

\end{axis}
\end{tikzpicture}
\end{minipage}